\renewcommand{\v}[1]{\ensuremath{\mathbf{#1}}}
\newcommand{\ket}[1]{\left| #1 \right>}
\newcommand{\bra}[1]{\left< #1 \right|}
 \newtheorem{remark}{Remark} 
\newtheorem{proposition}{Proposition}
\newtheorem*{example*}{Example}
\newtheorem{corollary}{Corollary} 
 \newtheorem{definition}{Definition}
\def\bvec#1{\mathbf{#1}}
\def\bk{\bvec k}
\def\bh{\bvec h}
\def\bik{{\bvec  k'}}
\def\bn{\bvec n}
\def\Proof{\medskip\par\noindent{\bf Proof.  }}
\def\qed{$\,\blacksquare$\par} \def\>{\rangle} \def\<{\langle}
\def\Tr{{\rm Tr}}  \def\Reals{{\mathbb R}}
\begin{document}

\title{Isotropic quantum walks on lattices and the Weyl equation}

\author{Giacomo Mauro D'Ariano}\email{dariano@unipv.it}
\author{Marco Erba}\email{marco.erba01@ateneopv.it}
\author{Paolo Perinotti}\email{paolo.perinotti@unipv.it}
\affiliation{Universit\`a degli Studi di Pavia, Dipartimento di Fisica, QUIT Group, and INFN Gruppo IV, Sezione di Pavia, via Bassi 6, 27100 Pavia, Italy}

\begin{abstract}
We present a thorough classification of the isotropic quantum walks on lattices of dimension $d=1,2,3$ with a coin system of dimension $s=2$. For $d=3$ there exist two isotropic walks, namely the Weyl quantum walks presented in Ref.~\cite{DP14}, resulting in the derivation of the Weyl equation from informational principles. The present analysis, via a crucial use of isotropy, is significantly shorter and avoids a superfluous technical assumption, making the result completely general.
\end{abstract}
\keywords{Quantum walks, Cayley Graphs, Weyl Equation, Isotropy}
\pacs{03.67.-a, 03.67.Ac, 03.65.Ta}
\maketitle

\section{Introduction}\label{sec:intro}
Recently the possibility of implementing actual quantum simulations of quantum fields~\cite{1367-2630-14-7-073050,PhysRevA.88.023617,1367-2630-19-6-063038,PhysRevLett.112.120406} has been accompanied by novel approaches to foundations of the theory~\cite{PhysRevA.94.012335,ArrighiNJP14,Arrighi2016,PhysRevA.95.022106}, including its derivation from informational principles~\cite{DP14,BDP16} and the recovery of its Lorentz covariance~\cite{PhysRevA.94.042120}. This has provided a progress in the research based on the idea originally proposed by Feynman~\cite{Feynman1982} of recovering physics as pure quantum information processing. Deriving quantum field theory from just denumerable quantum systems provides an emergent notion of space-time, with no prior background. This suggests that the approach may be promising for a future development of quantum theories of gravity.

The mathematical formalisation of the discrete quantum algorithm running a quantum field dynamics is provided by the notion of quantum cellular automaton~\cite{watrous1995one,schumacher2004reversible,ARRIGHI2011372}. A quantum cellular automaton is a unitary homogeneous evolution of the algebra of local observables that preserves locality. When the automaton is linear in the local algebra generators, the cellular automaton is usually referred to as a quantum walk (QW)~\cite{ADZ93,AB01,S03}, and is suited for the description of the free field theory for a fixed number of particles. 

A quantum walk on a graph represents a coherent counterpart of a classical random walk on the same graph. In the derivation of Ref.~\cite{DP14} it was proved that, if one assumes homogeneity of the evolution, the graph must be the {\em Cayley graph} of a group $G$. When the graph corresponds to a free Abelian group $G\cong\mathbb{Z}^d$, one finds the two \emph{Weyl QWs} (one for the left- and one for the right-handed mode), recovering the Weyl equation in $d+1$ dimensions for $d=1,2,3$. An alternative derivation of the Weyl QWs for $d=3$ on the BCC lattice has been recently presented in Ref.~\cite{PhysRevA.95.062344}. In Ref.~\cite{DP14} the derivation of the Weyl QWs exploited the technical assumption that there is a quasi-isometry~\cite{JM08} of the Cayley graph in a Euclidean manifold such that no vertex can lie within the sphere of nearest neighbours. On the other hand, most of the derivation did not use the isotropy principle. In the present paper, on the contrary, we exploit the isotropy principle from the very beginning of the derivation, thus avoiding the above assumption and making the classification of the isotropic QWs on $\mathbb{Z}^d$ completely general. In the present paper the derivation of the Weyl QWs is included in a complete classification of isotropic QWs on lattices of dimension $d = 1, 2, 3$ with a coin system of dimension $s = 2$. The result exploits the isotropy notion of Ref. \cite{DP14}, which is extended in this paper in order to account for groups with generators of different orders. We will introduce a technique to construct the Cayley graphs of a given group $G$ supporting an isotropic QW. Remarkably, the Cayley graph is unique for each dimension $d=1,2,3$.

The manuscript is organized as follows. In Sec.~\ref{sec:QW} we review the notion of Cayley graph of a group $G$, and define QWs on Cayley graphs, introducing the definition of isotropy and its main properties. In Sec.~\ref{sec:Zd} we review the theory of QWs on free Abelian groups. In Sec.~\ref{sec:presentations} we select the possible Cayley graphs according to a necessary condition for a QW to be isotropic. In Sec.~\ref{Sec:Derivation} we prove a second necessary condition for isotropy that is used in the appendix to refine the selection of Cayley graphs, and we solve the unitarity condition on the selected Cayley graphs for $d=1,2,3$, finding the two Weyl QWs. Sec.~\ref{Sec:conclu} closes the paper with some concluding remarks, whereas in Appendix~\ref{app:excluding} we report technical proofs and details.

\section{Isotropic QWs on Cayley graphs}\label{sec:QW}

We now define the QW on a Cayley graph $\Gamma(G,S_+)$ of a group $G$, with generating set $S_+$. A generating set $S_+\subseteq G$ is a set of elements of $G$ such that all the elements of the group can be expressed as words of elements of $S_+$ along with their inverses. The Cayley graph is a coloured directed graph with the elements of $G$ as vertices and the elements of $S_+$ as edges: a colour is associated to each generator $h\in S_+$, and two vertices $g,g'\in G$ are connected by the coloured edge $h\in S_+$ if $g'=gh$, with the arrow directed from $g$ to $g'$. In the following we will take $|S_+|<\infty$, namely the group $G$ is finitely generated. The Cayley graph of a group can be defined by giving a \emph{presentation}, namely choosing a set of generators (an alphabet) and a set of \emph{relators}, i.e~a set of words which are equal to the identity of $G$. This completely specifies a unique group $G$. The cardinality of the group $G$ can be finite or infinite, depending on its relators, however the most interesting case in the present context is that of a finitely presented infinite
group.

Let $\{|g\>\}_{g\in G}$ be an orthonormal basis for $\ell^2(G)$. The right-regular representation $T$ of $G$ is defined as
\begin{equation}
T_g|g'\>:=|g'g^{-1}\>.
\end{equation}
A QW on the Cayley graph $\Gamma(G,S_+)$ of the group $G$ is a unitary operator $A$ on $\ell^2(G)\otimes \mathbb{C}^s$, with $1\leq s<\infty$, that can be written as
\begin{equation*}
A=\sum_{h\in S} T_h\otimes A_h,
\end{equation*}
where $S=S_+\cup S_-$, $S_-=S_+^{-1}$ is the set of inverses of $S_+$, and $\{A_h\}_{h\in S}\subseteq\mathbb{M}_s(\mathbb{C})$ are the so-called \emph{transition matrices} of the QW.

It is worth mentioning that also other constructions of QWs have been given in the literature, for example QWs such that the coin system is generated by the set of edges of the underlying graph (see e.g.~Ref~\cite{montanaro2007quantum}, and Ref.~\cite{kempe2003quantum} for an overview).

Generally we will consider also self-transitions, corresponding to the inclusion of the identity $e\in G$ in the generating set which is then given by $S\equiv S_+\cup S_-\cup\{e\}$. In the following, for each group $G$ considered, we will assume $A_h\neq 0$ for all $h\in S_+\cup S_-$, whereas in general we allow for the case $A_e=0$.
We also denote by $S^{n}_+ \subseteq S_+$ the set of  generators of order $n\geq 2$, i.e.~$n$ is the smallest integer such that $h^n = e$. Notice that the most common case is that of $n=+\infty$.

For the purpose of introducing the concept of isotropic QWs, we remind that a graph automorphism is defined as a bijective map of the vertices that preserves the set of edges. For a Cayley graph this means that the automorphism $l$ is such that if $g'=gh$, then $l(g')=l(g)h'$, with $g,g'\in G$ and $h,h'\in S_+$. Then, an automorphism of the Cayley graph can be expressed as a permutation $\lambda$ of the set of colours $S_+$, where for every $g\in G$ and $h\in S_+$ one has $l(gh)=l(g)\lambda(h)$ for some permutation $\lambda$ of $S_+$. Let us denote by $\Lambda$ a group of permutations of the elements of $S_+$.
\begin{definition}[Isotropic QW]
A QW on $\Gamma(G,S_+)$ is called \emph{isotropic} with respect to $S_+$ if there exists a group $L$ of automorphisms of $\Gamma(G,S_+)$ that can be expressed as a permutation of the colours $S_+$, such that 
the evolution operator of the QW is $L$-covariant, i.e.~there exists a projective unitary representation $U$ over $\mathbb{C}^s$ of $L$ such that 
	\begin{equation*}
	A_{\lambda(h)} = U_l A_h U_l^\dagger\quad \forall l\in L,\forall h\in S_+,
	\end{equation*}
where $\lambda\in\Lambda$, and such that the action of $\Lambda$ is transitive on each subset $S_+^n$. 
\end{definition}
The previous definition guarantees that the group of local changes of basis representing the isotropy group $L$---which is a group of automorphisms of the graph---acts just as a permutation of the transition matrices, implying that all the directions are dynamically equivalent.

To satisfy homogeneity, one has to demand also the following condition~\footnote{The homogeneity requirement defined in Ref.~\cite{DP14} should be completed upon requiring that any two nodes remain ``distinguishable'' from the point of view of a third node. For details we will refer to Ref.~\cite{unpubDP17}. Eq.~\eqref{comm} follows from this definition.}:
\begin{equation}\label{comm}
[U_l,A_h]\neq 0\quad \forall h\in S_+,\forall l\in L : l(h)\neq h .
\end{equation}
Indeed, two transition matrices associated to different generators must be distinct. 
In particular, this implies that if $L$ does not contain nontrivial elements stabilizing all the $h\in S$, then the representation $U$ must be faithful (otherwise it would contain at least one nontrivial element represented as $I_s$).

\begin{proposition}\label{Pr1} The automorphisms of the Cayley graph $\Gamma(G,S_+)$ are also automorphisms of  $G$.
\end{proposition}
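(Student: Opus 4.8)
The plan is to show that any graph automorphism $l$ of $\Gamma(G,S_+)$ expressible as a colour permutation $\lambda$ of $S_+$ is, once normalised to fix the identity, a bijective homomorphism of $G$, with $\lambda$ being exactly its restriction to the generating set. The only property I will use is the defining one recalled just before the statement: $l(gh)=l(g)\lambda(h)$ for all $g\in G$ and $h\in S_+$. The strategy is to promote this local (edge-by-edge) compatibility into the global (word-by-word) multiplicativity required of a group automorphism, and the engine for this is simple iteration of the covariance relation.

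First I would reduce to the case $l(e)=e$. Since left translations $g\mapsto ag$ are themselves colour-permutation automorphisms inducing the trivial $\lambda$, a general such automorphism factors as a translation composed with one fixing the identity, and only the latter carries the information contained in $\lambda$; being expressed purely as a colour permutation, the automorphism of interest fixes $e$. Next I would extend $\lambda$ from $S_+$ to the symmetric set $S=S_+\cup S_-$: inverting the edge relation $g'=gh$ with $h\in S_+$ gives $l(g'h^{-1})=l(g')\lambda(h)^{-1}$, so defining $\lambda(h^{-1}):=\lambda(h)^{-1}$ makes $l(gh)=l(g)\lambda(h)$ valid for every $h\in S$.

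I would then iterate this relation along words. For any $g\in G$ choose a factorisation $g=h_1 h_2\cdots h_k$ with $h_i\in S$; starting from $l(e)=e$ and applying the extended relation $k$ times yields $l(g)=\lambda(h_1)\lambda(h_2)\cdots\lambda(h_k)$. Because the left-hand side depends only on the vertex $g$, the product on the right is word-independent, i.e.\ $\lambda$ sends every relator of $G$ to a relator and hence descends to a single-valued map on $G$. Multiplicativity is then immediate: for $g_1=h_1\cdots h_k$ and $g_2=h_1'\cdots h_m'$, concatenating the two words gives $l(g_1 g_2)=\lambda(h_1)\cdots\lambda(h_k)\lambda(h_1')\cdots\lambda(h_m')=l(g_1)l(g_2)$.

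Finally, bijectivity of $l$ on the vertex set $G$ is built into its being a graph automorphism, so $l$ is a bijective homomorphism, i.e.\ an automorphism of $G$. I expect the only delicate point to be the well-definedness step: one must argue that the single-valuedness of $l$ as a map of vertices forces $\lambda$ to be compatible with the relators, since it is precisely there that the combinatorial data of the graph is converted into the algebraic constraint characterising a group automorphism. The reduction to $l(e)=e$ and the extension of $\lambda$ to $S$ are, by comparison, routine bookkeeping.
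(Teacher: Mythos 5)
Your proof is correct and follows essentially the same route as the paper's: fix $l(e)=e$ (the paper gets this directly from $\lambda(h)=l(h)=l(eh)=l(e)\lambda(h)$, which is what your ``expressed purely as a colour permutation'' remark amounts to), extend the covariance relation $l(gh)=l(g)\lambda(h)$ from $S_+$ to $S=S_+\cup S_-$, and iterate it along words to obtain $l(h_1\cdots h_p)=l(h_1)\cdots l(h_p)$ and hence multiplicativity on all of $G$. Your extra observations---the factoring through left translations and the word-independence/relator point (which is automatic, since $l$ is given as a well-defined map on vertices)---are harmless elaborations of the same argument.
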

\begin{proof}
Consider the action of arbitrary elements $l \in L$ on the graph vertices. We have 
\begin{equation*}
\lambda(h) = l(h)=l(eh) = e\lambda(h),\quad \forall h\in S_+,
\end{equation*}
and since $l(gh)=l(g)\lambda(h)$ $\forall g\in G$, then $l(e)=e$. The same holds $\forall h\in S_-$.  Moreover
\begin{equation*}
l(hh') = l(h)\lambda(h') \equiv l(h)l(h')\quad \forall h,h'\in S.
\end{equation*}
Iterating, in general we obtain 
\begin{equation}\label{finiteL}
l(h_{1}\cdots h_{p}) = l(h_{1}) \cdots l(h_{p}),\quad \forall h_1,\ldots ,h_p \in S,
\end{equation}
and, being $S$ a set of generators for $G$, this amounts to
\begin{equation*}
 l(gg') = l(g)l(g')\quad \forall g,g'\in G.
\end{equation*}
Accordingly, $L$ is a group automorphism of $G$.
\end{proof}
The isotropy conditions corresponds to the covariance
\begin{equation}
A=\sum_{h\in S} T_h\otimes A_h=\sum_{h\in S} T_{l(h)}\otimes U_lA_hU^\dag_l\quad \forall l\in L.\label{covW}
\end{equation}
The covariance condition \eqref{covW} and the transitivity of $\Lambda$ on each $S^n_+$ imply, by linear independence of the $T_h$, that every $S_+^{n}$ is invariant under some subgroup $L^n \leq L$. In fact, any $S_+^n$ is the orbit of an arbitrary generator $h_1^{(n)}\in S_+^n$ under $L^n$, denoted with $\mathcal{O}_{L^n}(h_1^{(n)})$. 
\begin{proposition}\label{P1}
The isotropy group $L$ is a finite subgroup of $\operatorname{Aut}(G)$.
\end{proposition}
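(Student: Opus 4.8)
The plan is to extract two separate assertions from the statement and dispatch them in order: membership, then finiteness. The membership claim $L \leq \operatorname{Aut}(G)$ is already delivered by Proposition~\ref{Pr1}, which shows that every element of $L$, being a colour-permuting automorphism of the Cayley graph $\Gamma(G,S_+)$, is in fact a group automorphism of $G$ satisfying $l(gg')=l(g)l(g')$. So the only real work lies in proving that $L$ is \emph{finite}, and for this I would exhibit $L$ as a subgroup of a finite symmetric group.

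The key observation is that each $l\in L$ acts on the colour set $S_+$ as a permutation $\lambda\in\Lambda$, via $l(gh)=l(g)\lambda(h)$; in particular, setting $g=e$ and using $l(e)=e$ from Proposition~\ref{Pr1}, one has $l(h)=\lambda(h)$ for every $h\in S_+$. I would therefore consider the assignment $l\mapsto\lambda$ and first check that it is a group homomorphism from $L$ into $\operatorname{Sym}(S_+)$, the symmetric group on the finite set $S_+$: this is immediate because composing graph automorphisms composes the associated colour permutations. Since $|S_+|<\infty$ by hypothesis, $\operatorname{Sym}(S_+)$ is a finite group, so it suffices to show that this homomorphism is injective.

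Injectivity is the crux of the argument, and I expect it to be the main (though mild) obstacle, since it is where one must combine both halves of Proposition~\ref{Pr1}. Suppose $l,l'\in L$ induce the same permutation $\lambda$ of $S_+$. Then $l(h)=\lambda(h)=l'(h)$ for all $h\in S_+$; using that $l,l'$ are group homomorphisms, they agree on the inverses $S_-$ (as $l(h^{-1})=l(h)^{-1}$) and on the identity, hence on all of $S=S_+\cup S_-\cup\{e\}$. Because $S_+$ generates $G$, every $g\in G$ is a word $h_1\cdots h_p$ in the elements of $S$, and the multiplicativity established in Eq.~\eqref{finiteL} gives $l(g)=l(h_1)\cdots l(h_p)=l'(h_1)\cdots l'(h_p)=l'(g)$. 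Thus $l=l'$, so the map $l\mapsto\lambda$ is injective.

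Combining the two steps, $L$ embeds as a subgroup of the finite group $\operatorname{Sym}(S_+)$ and is therefore finite, while Proposition~\ref{Pr1} places it inside $\operatorname{Aut}(G)$; hence $L$ is a finite subgroup of $\operatorname{Aut}(G)$, as claimed. The essential input is that a graph automorphism of $\Gamma(G,S_+)$ is completely determined by its action on the finitely many colours $S_+$, which is exactly the consequence of $S_+$ being a finite generating set together with the homomorphism property.
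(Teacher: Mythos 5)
Your proof is correct and follows essentially the same route as the paper: the paper's one-line justification ``by Eq.~\eqref{finiteL} $L\cong\Lambda$, hence $L$ is finite'' is precisely the injective homomorphism $l\mapsto\lambda$ into the finite permutation group on $S_+$ that you spell out, with membership in $\operatorname{Aut}(G)$ supplied by Proposition~\ref{Pr1} in both cases. Your version merely makes explicit the injectivity argument (agreement on $S_+$ propagates to $S_-$, $e$, and then all of $G$ via multiplicativity and the generating property) that the paper leaves implicit.
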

\begin{proof}
By Proposition~\ref{Pr1} the isotropy group $L$ is a group of automorphisms of $G$. By Eq. \eqref{finiteL} $L\cong\Lambda$, hence $L$ is finite.
\end{proof}
\begin{corollary}
Each subgroup $L^n \leq L$ is isomorphic to a finite permutation group acting transitively on $S_+^n$.
\end{corollary}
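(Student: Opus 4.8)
The plan is to read the corollary as a direct consequence of Proposition~\ref{P1} together with the orbit characterisation $S_+^n=\mathcal{O}_{L^n}(h_1^{(n)})$ recorded just before it. All three assertions it contains---finiteness of $L^n$, its realisation as a permutation group, and transitivity on $S_+^n$---are already essentially established, so no new construction is needed; the work is only in assembling them, which is why the statement is phrased as a corollary.

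I would carry out the steps in the following order. First, $L^n\leq L$ and $L$ is finite by Proposition~\ref{P1}, so $L^n$ is a finite group. Second, Eq.~\eqref{finiteL} gives $L\cong\Lambda$, and by definition $\Lambda$ is a group of permutations of $S_+$, i.e.~$\Lambda\leq\operatorname{Sym}(S_+)$; hence this isomorphism exhibits $L^n$ as a finite permutation group acting faithfully on the colour set $S_+$. Third, since $L^n$ leaves $S_+^n$ invariant and $S_+^n$ is a single $L^n$-orbit, the induced action on the subset $S_+^n$ is transitive, so restricting yields a homomorphism $\rho\colon L^n\to\operatorname{Sym}(S_+^n)$ with transitive image. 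Collecting these points gives the statement.

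The one delicate point---and the main obstacle---is the precise sense of the word \emph{isomorphic}. The action on all of $S_+$ is faithful for free, so if the asserted permutation group is understood to act on $S_+$ and merely to be transitive on the invariant block $S_+^n$, the corollary is immediate. If one instead insists on a faithful action literally on $S_+^n$, one must show that $\rho$ is injective, i.e.~that no nontrivial element of $L^n$ fixes every generator of order $n$ while permuting generators of some other order $n'\neq n$. I would try to rule this out using the homogeneity condition~\eqref{comm}: an $l$ with $l(h)=h$ for all $h\in S_+^n$ satisfies $[U_l,A_h]=0$ on $S_+^n$, and one would combine this with faithfulness of $U$ to force $l$ trivial. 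The difficulty is that such an element can a priori still act nontrivially on a different order class, so injectivity of $\rho$ is not automatic from finiteness alone; pinning it down is the only nonroutine ingredient, and I expect it to be resolved either by the $S_+$-reading above or by the concrete low-dimensional analysis carried out later for $d=1,2,3$.
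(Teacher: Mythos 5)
Your proposal is correct and takes essentially the same route as the paper, which states this corollary without proof precisely because it is the direct assembly you describe: finiteness of $L^n\leq L$ from Proposition~\ref{P1}, the permutation realisation via $L\cong\Lambda$ from Eq.~\eqref{finiteL}, and transitivity from the orbit characterisation $S_+^n=\mathcal{O}_{L^n}(h_1^{(n)})$ recorded just before the corollary. Your caveat about faithfulness of the action restricted to $S_+^n$ is a fair reading-question, but the paper's intended sense is the one your first reading supplies ($L^n$ realised as permutations of $S_+$ whose restriction to the invariant block $S_+^n$ is transitive), so no further argument is needed.
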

\begin{corollary}\label{cor2}
If all generators have the same order, $L$ is isomorphic to a finite permutation group acting transitively on $S_+$.
\end{corollary}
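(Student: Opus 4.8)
The plan is to recognise the statement as the degenerate case of the preceding corollary, in which the partition of $S_+$ according to the order of the generators consists of a single block. First I would note that the hypothesis that all generators share a common order $n$ means precisely that $S_+^n = S_+$, while $S_+^m = \emptyset$ for every $m \neq n$. Thus there is only one nonempty order class to consider, and the transitivity of $\Lambda$ guaranteed by the definition of isotropy on each $S_+^n$ becomes transitivity on the whole of $S_+$.

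Next I would identify the relevant subgroup. By Proposition~\ref{Pr1} together with the definition of isotropy, every element of $L$ acts as a permutation $\lambda\in\Lambda$ of the colour set $S_+$, hence every $l\in L$ stabilises $S_+$ setwise. Consequently the subgroup $L^n\leq L$ under which $S_+^n$ is invariant coincides with all of $L$ once $S_+^n=S_+$. With these two observations the conclusion is immediate: applying the preceding corollary with $S_+^n=S_+$ and $L^n=L$ shows that $L$ is isomorphic to a finite permutation group acting transitively on $S_+$. Equivalently, one may argue directly from Proposition~\ref{P1}, which gives $L\cong\Lambda$ finite and realised as permutations of $S_+$, and combine it with the isotropy hypothesis of transitivity on each $S_+^n$; since $S_+=S_+^n$, transitivity on $S_+$ follows at once.

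I do not expect any genuine obstacle here, as the corollary is a straightforward specialisation of the previous one. The only point meriting an explicit line of justification is the identification $L^n=L$, which rests solely on the fact that the isotropy group acts by permutations of the colours and therefore already preserves the entire generating set $S_+$; everything else is inherited verbatim from Proposition~\ref{P1} and the preceding corollary.
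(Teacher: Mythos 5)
Your proof is correct and follows essentially the same route the paper intends: Corollary~\ref{cor2} is stated there without proof as the immediate specialisation of the preceding corollary (together with Proposition~\ref{P1}, which gives $L\cong\Lambda$ finite) to the case of a single order class $S_+^n=S_+$, with $L^n=L$ since every element of $L$ already permutes the colour set $S_+$. Your explicit justification of the identification $L^n=L$ is exactly the point the paper leaves tacit, and nothing further is needed.
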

By Eq. \eqref{covW} one can always choose the projective unitary representation $U$ with unit determinant, namely
$U_l\in\mathbb{SU}(s)$ $\forall l\in L$. Notice that, by definition of isotropy, either $S_+^n$ does not contain the inverse of any of its elements or it coincides with the whole set $S^n \coloneqq S_+^n \cup S_-^n$.


In the following we will consider the isotropic QWs on $\Gamma(G,S_+)$ with $s=2$ and $G\cong \mathbb{Z}^d$ with $d=1,2,3$. For $d=3$ we discover that there are two QWs (modulo discrete symmetries) that for large-scales give the two Weyl equations, one for left- and one for right-handed mode. In Ref.~\citep{DP14} it is shown that, coupling two Weyl QWs in the only possible way consistent with the above requirements (specifically locality), the resulting QW is unique (modulo discrete symmetries) and describes exactly the Dirac equation for large scales.

\section{Quantum Walks on Cayley graphs of $\mathbb{Z}^d$}\label{sec:Zd}

Since we are considering Abelian groups, we will denote the group elements as usual with the
boldfaced vector notation as $\bn\in G$, and the generators as $\bh\in S$. Moreover, we will use the additive notation for the group composition, and $0$ for the identity element. The space $\ell^2(G)$ will be the span of $\{|\bn\>\}_{\bn\in G}$ and the generators $\bh$ are represented by the operators 
\begin{equation*}
T_\bh:=\sum_{\bn\in G}|\bn+\bh\>\<\bn|.
\end{equation*}
We now treat the elements of $G$ as vectors in $\Reals^d$.
Generally the elements of $S$ are linearly dependent. We introduce all the sets $D_n\subseteq S_+$ of linearly independent elements
\begin{equation*}
D_n:=\{\bh_{i_1},\ldots,\bh_{i_d}\}, 
\end{equation*}
where $n$ labels the specific subset. For every $D_n$ we construct the dual set $\tilde D_n$ defined by
\begin{equation*}
\tilde D_n:=\{\tilde \bh_1^{(n)},\ldots,\tilde \bh_d^{(n)}\},\quad
\end{equation*}
where 
\begin{equation*}
\tilde \bh^{(n)}_l\cdot \bh_{i_m}=\delta_{lm}.
\end{equation*}
Now we define the set 
\begin{equation*}
\tilde D:=\bigcup_n\tilde D_n.
\end{equation*}
The Brillouin zone $B\subseteq\Reals^d$ is defined as the polytope
\begin{equation*}
B=\bigcap_{\tilde\bh\in\tilde D}\{\bk\in\Reals^d\mid-\pi|\tilde\bh|^2\leq\bk\cdot\tilde\bh\leq\pi|\tilde\bh|^2\}.
\end{equation*}
The unitary operator of the QW is given by
\begin{equation}
  A=\sum_{\bn\in G}\sum_{\bh\in S}|\bn+\bh\>\<\bn|\otimes A_\bh.
  \label{eq:translinvpos}
\end{equation}
One has $[A,T_\bh\otimes I_s]=0$. The unitary irreps are one-dimensional,
and are classified by the joint eigenvectors of $T_\bh$ 
\begin{equation*}
T_{\bh_i}|\bk\>=:e^{-i\bk\cdot\bh_i}|\bk\>,
\end{equation*}
where 
\begin{equation*}
  |\bk\>:=\frac1{\sqrt{|B|}}\sum_{\bn\in G}e^{i\bk\cdot\bn}|\bn\>,\quad|\bn\>=\frac1{\sqrt{|B|}}\int_Bd\bk e^{-i\bk\cdot\bn}|\bk\>.
\end{equation*}
Notice that
\begin{equation*}
  \<\bk|\bik\>=\frac1{|B|}\sum_{\bn\in G}e^{i(\bk-\bik)\cdot\bn}=\delta_{2\pi}(\bk-\bik).
\end{equation*}
Translation invariance of the QW in Eq.~\eqref{eq:translinvpos} then implies the following
form for the unitary evolution operator
\begin{equation*}
  A=\int_{B}d\bk |\bk\>\<\bk|\otimes A_\bk,
\end{equation*}
where the the matrix
\begin{equation}\label{Ak}
A_\bk=\sum_{\bh\in S}e^{i\bh\cdot\bk}A_\bh
\end{equation}
is unitary for every $\bk$. Notice that $A_\bk$ is a matrix
polynomial in $e^{i\bh\cdot\bk}$. The unitarity conditions on $A_\bk$
for all $\bk\in B$ then read
\begin{align}
  &\sum_{\bh\in S}A_\bh A_\bh^\dag= \sum_{\bh\in S}A_\bh^\dag A_\bh=I_s,\label{eq:normalization}\\
  &\sum_{\bh-\bh'=\bh''}A_\bh A_{\bh'}^\dag   = \sum_{\bh-\bh'=\bh''}A_{\bh'}^\dag  A_\bh=0.
\label{eq:condunit}
\end{align}
The previous equations are a set of necessary and sufficient conditions for the unitarity of the time evolution, since they can be derived just imposing that the matrix $A_\bk$ is unitary. As explained in Sec.~\ref{sec:QW}, the requirement of isotropy for the QW needs the existence
of a group that acts transitively over the generator set $S_+$ with a faithful projective unitary representation that
satisfies Eq. (\ref{covW}). Notice that one has the identity
\begin{equation*}
\left(I\otimes A_{\bk=0}^\dag\right)A =\sum_{\bh\in S}T_\bh\otimes {A'}_\bh,
\end{equation*}
with $\sum_{\bh\in S}{A'}_\bh=I_s$, namely modulo a uniform local unitary we can always assume
\begin{equation}
  \sum_{\bh\in S}A_\bh=I_s,
  \label{eq:invvac}
\end{equation}
as explained in the following. Indeed, the isotropy requirement implies that $A_{\bk=0}$ commutes with the representation of the isotropy
group $L$, whence we can classify the QW by requiring identity (\ref{eq:invvac}) and then
multiplying the QW operator $A$ on the left by $(I\otimes V)$, with $V$ unitary commuting with the
representation of $L$. In the case that the representation is irreducible, then by Schur lemma we
have only $V=I_s$.

From now on we will restrict to $s=2$, which corresponds to the simplest
nontrivial QW in the case of $G$ Abelian. Indeed, in Ref.~\cite{PhysRevA.93.062334} it has been proved that if $G$ is an arbitrary Abelian group and $s=1$ (\emph{scalar} QW case), then the evolution is trivial.

\section{Imposing isotropy: admissible Cayley graphs of $\mathbb{Z}^d$}\label{sec:presentations}
In this Section we investigate how the isotropy assumption restricts the possible presentations of $G\cong \mathbb{Z}^d$. By Prop.~\ref{P1}, the isotropy groups are finite subgroups $L < \operatorname{Aut}(\mathbb{Z}^d)\cong \mathbb{GL}(d,\mathbb{Z})$: their action, by Cor.~\ref{cor2}, is defined to be transitive on the generating set $S_+$ and then is extended on all $\mathbb{Z}^d$ by linearity. Indeed,
the generating set $S_+$ is the orbit of an arbitrary vector $\v{v}\in \mathbb{R}^d$ under the action of a finite subgroup $L<\mathbb{GL}(d,\mathbb{Z})$.

Let $M$ be a representation on integers of $L$ (so that $M_lM_f=M_{lf}$ for $l,f\in L$), and let us define the matrix $P\coloneqq \sum_{l\in L}M_l^{T}M_l$. For every $f\in L$  we have
\begin{equation}
\begin{aligned}\label{property}
PM_f &= \sum_{l\in L}M_{l}^{T}M_{lf} = \sum_{l'\in L} M_{l'f^{-1}}^{T}M_{l'} = \\
 &= \sum_{l'\in L} \left(M_{l'}M_{f^{-1}}\right)^{T}M_{l'} = M_{f^{-1}}^{T} P .
\end{aligned}
\end{equation}
Moreover, being a sum of positive operators, $P$ is also positive. Then, for $\ket{\eta} \in \operatorname{ker} P$, $\bra{\eta} P\ket{\eta} = \sum_{l\in L}\bra{\eta}M_l^{T}M_l\ket{\eta} = 0$ implies that $M_l \ket{\eta}=0$ $\forall l\in L$, namely $\ket{\eta}=0$ since all $M_l$ are invertible. Thus $P$ has trivial kernel and we can define the invertible change of representation:
\begin{align}\label{eq:orthrep}
\tilde{M_l} \coloneqq P^{1/2} M_l P^{-1/2}.
\end{align}
Using the definition of $P$ and property \eqref{property}, we obtain
\begin{align*}
\tilde{M_l}^{T}\tilde{M_l} &= P^{-1/2} M_l^{T} P M_l P^{-1/2} = \\
&= P^{-1/2} M_{l}^{T} M_{l^{-1}}^{T} P P^{-1/2} = I.
\end{align*}
This means that, as long as one embeds the Cayley graphs in $\mathbb{R}^d$, $L$ can always be represented orthogonally. Notice that the representation $\tilde{M}$ is in general on reals, namely $\lbrace \tilde{M}_l \rbrace_{l\in L} \subset \mathbb{O}(d,\mathbb{R})$ (from now on we denote it just as $\mathbb{O}(d)$). 

As one can find in Refs. \cite{mackiw1996finite,tahara_1971}, the finite subgroups of $\mathbb{GL}(d,\mathbb{Z})$ which are also subgroups of $\mathbb{O}(d)$ are isomorphic to:
\begin{itemize}
\item $d=3$: $\mathbb{Z}_n$, $D_{n}$ with $n\in \{1,2,3,4,6\}$, $A_4$, $S_4$, and the direct products of all the previous groups with $\mathbb{Z}_2$;
\item $d=2$: $\mathbb{Z}_n$ and $D_{n}$ with $n\in \{1,2,3,4,6\}$;
\item $d=1$: $\{e\}$ and $\mathbb{Z}_2$.
\end{itemize}
Accordingly, our cases of interest $d=1,2,3$ can be treated together, considering just $d=3$. We notice that for $d=1,2$ the finite subgroups of $\mathbb{GL}(d,\mathbb{Z})$ coincide with those of $\mathbb{O}(d)$, while for $d=3$ we restricted to those finite subgroups of $\mathbb{GL}(3,\mathbb{Z})$ that are also subgroups of $\mathbb{O}(3)$. 

A given generating set for $\mathbb{Z}^d$ satisfying the definition of isotropy can be constructed orbiting a vector in $\mathbb{R}^d$ under the aforementioned finite subgroups in $\mathbb{O}(d)$. Accordingly, given a presentation for $\mathbb{Z}^d$, if the associated Cayley graph satisfies isotropy then one can represent the generators having all the same Euclidean norm, namely they lie on a sphere centered at the origin: they form the orbit---which we will denote as $\mathcal{O}_L(\v{v})$---of an arbitrary $d$-dimensional real vector $\v{v}$ under the action of a finite subgroup $L<\mathbb{GL}(d,\mathbb{Z})$ represented in $\mathbb{O}(d)$.

In Appendix~\ref{app:excluding} we will consider the orbit of a vector $\v{v}\in \mathbb{R}^3$ under the real, orthogonal and three-dimensional faithful representations of $L$. Indeed, if we took into account also unfaithful representations, these would have nontrivial kernel---which is a normal subgroup---and the effective action on $\v{v}$ would be given by a faithful representation of the quotient group. Inspecting the subgroup structure of the finite subgroups of $\mathbb{GL}(3,\mathbb{Z})$, one can check that all the possible quotients are themselves finite subgroups of $\mathbb{GL}(3,\mathbb{Z})$~\footnote{This is straightforward as far as $\mathbb{Z}_n$ and $D_n$ are concerned; as for $A_4$ and $S_4$, one can verify it in a direct way considering their faithful representations given in Secs.~\ref{app:A_4} and~\ref{app:S_4}.}. Thus, the case of unfaithful representations is already considered as long as we take into account the faithful ones.

\section{The QWs with minimal complexity: the Weyl quantum walks}\label{Sec:Derivation}
In the following $X=V|X|$ will denote the polar decomposition of the operator $X$, with 
$|X|\coloneqq \sqrt{X^\dag X}$ the modulus of $X$, and $V$ unitary. Thus we will write the transition matrix
as
\begin{equation}
A_\bh=V_\bh|A_\bh|.
\end{equation}
From Eq.~\eqref{eq:condunit} with $\bh''=2\bh$ it follows that $A_\bh A_{-\bh}^\dag=0$, namely, $|A_\bh||A_{-\bh}|=0$. By definition the transition matrices are nonnull, hence 
$|A_\bh|$ and $|A_{-\bh}|$ must have orthogonal supports, and for $s=2$ they must then be
rank-one. Thus they can be written as follows
\begin{equation}
A_\bh \eqqcolon\alpha_{\bh}V_\bh|\eta_\bh\>\<\eta_\bh|,\quad A_{-\bh} \eqqcolon \alpha_{-\bh} 
V_{-\bh}|\eta_{-\bh}\>\<\eta_{-\bh}|,
  \label{eq:rankone}
\end{equation}
where $\{ |\eta_{+\bh}\>, |\eta_{-\bh}\>\}$ is an orthonormal basis and $\alpha_\bh> 0$. By the isotropy requirement we
have that for all $\bh,\bh'$ $\alpha_{\pm\bh}=\alpha_{\pm\bh'}\eqqcolon \alpha_\pm$. Furthermore, it is easy to see that we can choose $V_\bh=V_{-\bh}$ for every $\bh$~\footnote{We follow the argument of Ref.~\cite{DEPT15}. The condition $A_{\bh}^\dagger A_{-\bh}=0$ implies that $V_{\bh}V_{-\bh}^\dagger$ is diagonal in the basis $\{ |\eta_{+\bh}\>, |\eta_{-\bh}\> \}$. Since the transition matrices are not full rank, their polar decomposition is not unique: $V_{\bh} ( |\eta_{+\bh} {\>}{\<} \eta_{+\bh}| + e^{i\theta_{\bh}}|\eta_{-\bh} {\>}{\<} \eta_{-\bh}| )$ gives the same polar decomposition as $V_{\bh}$ $\forall \bh \in S$. Accordingly, one can tune the phases $\theta_{\pm \bh}$ to choose $V_{\bh}V_{-\bh}^\dagger=I$ $\forall \bh\in S$}.

Denoting the elements of $S_\pm$ as $\pm\bh_i$, suppose that there exists a subgroup $K\leq L$ such that, for some $\bh_1\in S_+$, $\forall \bh_i,\bh_j\in\mathcal{O}_K(\bh_1)$ with $\bh_i\neq \bh_j$, and for $\v{h}_l, \v{h}_m \in \{ \v{0}, \mathcal{O}_L(\v{h}_1) \}$, one has
\begin{equation}\label{e21}
\begin{split}
\!\bh_i-\bh_j=\bh_l-\bh_m\ \Longleftrightarrow \ (\bh_i=\bh_l) \vee( \bh_i=-\bh_m).
\end{split}
\end{equation}
Then, a second set of equations from conditions~\eqref{eq:condunit}
is 
\begin{align}
  & A_{\bh_1}A^\dag_{\bh_j}+A_{-\bh_j}A^\dag_{-\bh_1}=0,\label{eq:diagb}\\
  & A^\dag_{\bh_1}A_{\bh_j}+A^\dag_{-\bh_j}A_{-\bh_1}=0.\label{eq:diaga}
 \end{align}
Multiplying Eq.~\eqref{eq:diagb} by $A^ \dag_{\bh_j}$ on the left or
by $A_{\bh_1}$ on the right, we obtain
\begin{equation*}
  A^\dag_{\bh_j}A_{\bh_1}A^\dag_{\bh_j}=A_{\bh_1}A^\dag_{\bh_j}A_{\bh_1}=0.
\end{equation*}
Using the isotropy requirement and posing $A_{\v{h}_j} = U_kA_{\v{h}_i}U_k^\dagger$, we have
\begin{equation*}
  U_kA^\dag_{\bh_1}U_k^\dag A_{\bh_1}U_kA^\dag_{\bh_1}U_k^\dag=A_{\bh_1}U_kA^\dag_{\bh_1}U_k^\dag A_{\bh_1}=0.
\end{equation*}
By exploiting Eq.~\eqref{eq:rankone} both the previous equations become 
\begin{equation*}
  \<\eta_{\bh_1}|V^\dag_{\bh_1}U^\dag_k V_{\bh_1}|\eta_{\bh_1}\>\<\eta_{\bh_1}|U_k|\eta_{\bh_1}\>=0.
\end{equation*}
Then, at least one of the two following conditions must be satisfied
\begin{align}
  &\<\eta_{\bh_1}|V^\dag_{\bh_1}U^\dag_k V_{\bh_1}|\eta_{\bh_1}\>=0,\label{eq:adag}\\
  &\<\eta_{\bh_1}|U_k|\eta_{\bh_1}\>=0.\label{eq:a}
\end{align}
Furthermore, we remind that the representation $U$ can be chosen with unit determinant,
and for $s=2$ one has $U_k=\cos\theta I+i\sin\theta\,\bn_k\cdot\boldsymbol\sigma$.  Then, from Eqs.
(\ref{eq:adag}) and (\ref{eq:a}) one has $U_k=i\bn_k\cdot\boldsymbol\sigma$. Using the identity 
\begin{equation}
U_kU_{k'}=-\bn_k\cdot\bn_{k'}I-i(\bn_k\times\bn_{k'})\cdot\boldsymbol\sigma,
\end{equation}
it follows that all the $\bn_k$ must be mutually orthogonal and then $|K|\leq 4$. The case $K\cong \mathbb{Z}_3$ is not consistent with Eqs. \eqref{eq:adag} and \eqref{eq:a}. Accordingly, we end up with $K\in \{ I,\mathbb{Z}_2,\mathbb{Z}_2\times\mathbb{Z}_2,\mathbb{Z}_4 \}$. Notice that, up to a change of basis, one can always choose $\ket{\eta_{\pm \v{h}_1}}$ to be the eigenstates of $\sigma_Z$ without loss of generality. Then, by Eqs. \eqref{eq:adag},\eqref{eq:a} and imposing $U_k\in \mathbb{SU}(2)$ $\forall k\in K$, up to a change of basis it must be: either i) $U_K \coloneqq \operatorname{Rng}_K(U) = H$, where $H\coloneqq\{I,i\sigma_X,i\sigma_Y,i\sigma_Z\}$ is the Heisenberg group, or ii) $U_K =J$ where $J\in\{J_i\}_{i=1}^4$, where $J_1 \coloneqq \{ I,i\sigma_X \}$, $J_2 \coloneqq \{I,-V_{\bh_1}(i\sigma_X)V_{\bh_1}^\dagger \}$, $J_3\coloneqq \{ I,i\sigma_X,-I,-i\sigma_X \}$, and $J_4\coloneqq \{ I,-V_{\bh_1}(i\sigma_X)V_{\bh_1}^\dagger,-I,V_{\bh_1}(i\sigma_X)V_{\bh_1}^\dagger\}$, 
 or finally iii) $U_K=\{I\}$. We remark that $H$ is a projective faithful representation of $\mathbb{Z}_2\times \mathbb{Z}_2$ in $\mathbb{SU}(2)$, while $\{J_i\}_{i=1}^2$ are projective faithful representations of  $\mathbb{Z}_2$, while
 $\{J_i\}_{i=3}^4$ are unitary faithful representations of  $\mathbb{Z}_4$ in $\mathbb{SU}(2)$. We have thus proved the following result.
\begin{proposition}\label{prop1}
If the isotropy group $L$ contains a subgroup $K$ such that all the $\bh_k\in\mathcal{O}_K(\bh_1)$ (for $\bh_1\in S_+$) satisfy condition (\ref{e21}), then either $U_K=H$ or $U_K=J$ or $U_K=I$. 
\end{proposition}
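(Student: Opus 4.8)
The plan is to characterize the possible ranges of the projective representation $U$ restricted to the subgroup $K$, working entirely within $\mathbb{SU}(2)$ and leveraging the structural constraints already extracted in the surrounding text. I would begin from the key dichotomy established just before the statement: for each $k\in K$, condition~\eqref{e21} forces the product $A^\dag_{\bh_j}A_{\bh_1}A^\dag_{\bh_j}=A_{\bh_1}A^\dag_{\bh_j}A_{\bh_1}=0$, which upon inserting the rank-one form~\eqref{eq:rankone} and the isotropy substitution $A_{\bh_j}=U_kA_{\bh_1}U_k^\dag$ collapses to the scalar product condition giving Eqs.~\eqref{eq:adag} and~\eqref{eq:a}: at least one of $\<\eta_{\bh_1}|V^\dag_{\bh_1}U^\dag_k V_{\bh_1}|\eta_{\bh_1}\>=0$ or $\<\eta_{\bh_1}|U_k|\eta_{\bh_1}\>=0$ holds. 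Parametrizing $U_k=\cos\theta_k\,I+i\sin\theta_k\,\bn_k\cdot\boldsymbol\sigma$, the vanishing of either diagonal matrix element in the $\sigma_Z$-eigenbasis forces $\cos\theta_k=0$, i.e.\ $U_k=i\bn_k\cdot\boldsymbol\sigma$ for every nonidentity $k$.

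Next I would impose closure under the group law. Using the product identity $U_kU_{k'}=-\bn_k\cdot\bn_{k'}\,I-i(\bn_k\times\bn_{k'})\cdot\boldsymbol\sigma$, the requirement that $U_kU_{k'}$ again be a valid representative of some group element (hence again traceless unless it equals $\pm I$) forces $\bn_k\cdot\bn_{k'}\in\{0,\pm1\}$. Distinct nonidentity generators must therefore have mutually orthogonal axes, which immediately caps $|K|\leq 4$ and rules out $K\cong\mathbb{Z}_3$, leaving $K\in\{I,\mathbb{Z}_2,\mathbb{Z}_2\times\mathbb{Z}_2,\mathbb{Z}_4\}$ as recorded above. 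The remaining task is to enumerate the admissible ranges $U_K$ case by case. Fixing $\ket{\eta_{\pm\bh_1}}$ as the $\sigma_Z$-eigenstates, I would note that Eq.~\eqref{eq:a} says $U_k$ is off-diagonal in this basis (axis in the $XY$-plane), while Eq.~\eqref{eq:adag} says $V^\dag_{\bh_1}U_kV_{\bh_1}$ is off-diagonal (axis in the conjugated plane). For $\mathbb{Z}_2\times\mathbb{Z}_2$, three mutually orthogonal axes must be realized, and the only way to meet the diagonal-vanishing conditions for all three is $U_K=H=\{I,i\sigma_X,i\sigma_Y,i\sigma_Z\}$. For the $\mathbb{Z}_2$ and $\mathbb{Z}_4$ cases one has a single off-diagonal axis to place, which—depending on whether the off-diagonality is enforced through~\eqref{eq:a} or through~\eqref{eq:adag}—yields either the $i\sigma_X$ family ($J_1,J_3$) or its $V_{\bh_1}$-conjugate ($J_2,J_4$), exhausting the list $\{J_i\}_{i=1}^4$.

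The main obstacle will be the careful bookkeeping in this last enumeration: one must track which of the two alternatives~\eqref{eq:adag}/\eqref{eq:a} is active for each group element simultaneously, since for a single $k$ only \emph{one} of the two need vanish, yet the group-closure relations couple the choices across different $k$. In particular, verifying that no admissible configuration escapes the lists $H$ and $\{J_i\}$—for instance ruling out mixed assignments where some generators satisfy~\eqref{eq:a} and others only~\eqref{eq:adag} in a way that would produce a fifth inequivalent subgroup—requires checking consistency of the orthogonality constraints against the conjugation by $V_{\bh_1}$. Everything else (the $\mathbb{SU}(2)$ parametrization, the product identity, the $|K|\le 4$ bound) is routine once the dichotomy~\eqref{eq:adag}–\eqref{eq:a} is in hand, so the proof reduces to this finite case analysis, concluding that $U_K\in\{H\}\cup\{J_i\}_{i=1}^4\cup\{I\}$ exactly as claimed.
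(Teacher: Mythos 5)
Your proposal is correct and follows essentially the same route as the paper's own derivation: the dichotomy of Eqs.~\eqref{eq:adag}--\eqref{eq:a} extracted from the unitarity conditions under hypothesis~\eqref{e21}, the resulting tracelessness $U_k=i\bn_k\cdot\boldsymbol\sigma$, the product identity forcing mutually orthogonal axes and hence $|K|\leq 4$ with $K\cong\mathbb{Z}_3$ excluded, and the concluding finite enumeration $U_K\in\{H\}\cup\{J_i\}_{i=1}^4\cup\{I\}$. The only deviations are presentational---you fold the $\mathbb{Z}_3$ exclusion into the closure argument and explicitly flag the case-by-case bookkeeping in the final enumeration, which the paper likewise asserts rather than spells out---so the two arguments coincide in substance.
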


\subsection*{The isotropic QWs on $\mathbb{Z}^d$ for $d=1,2,3$}\label{sec:remainingCayley}
In Appendix~\ref{app:excluding} we make use of Prop.~\ref{prop1} along with the unitarity constraints to exclude an infinite set of Cayley graphs arising from the aforementioned finite subgroups of $\mathbb{O}(3)$. We then proved the following.
\begin{proposition}
The primitive cells associated to the unique graphs admitting isotropic QWs in dimensions $d=1,2,3$ are those shown in Fig.~\ref{fig:Lattices}. 
\end{proposition}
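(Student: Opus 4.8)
The plan is to run a case analysis over the finite subgroups $L<\mathbb{O}(3)$ enumerated above (the cases $d=1,2$ being recovered as restrictions), testing for each the orbits $S_+=\mathcal{O}_L(\v{v})$ against three requirements: that $S_+$ generate $\mathbb{Z}^d$, that a nonzero isotropic solution of the unitarity conditions \eqref{eq:normalization}--\eqref{eq:condunit} exist, and that the commutation requirement \eqref{comm} hold. First I would fix $L$ and classify the orbit types of a vector $\v{v}\in\mathbb{R}^3$: the generic orbit of size $|L|$ together with the finitely many degenerate orbits arising when $\v{v}$ lies on an axis or plane fixed by a nontrivial subgroup. Orbits whose $\mathbb{Z}$-span is a proper sublattice or a lower-dimensional subspace are discarded immediately, since they cannot be generating sets for $\mathbb{Z}^d$.

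For each surviving orbit I would exhibit a subgroup $K\leq L$ whose action on some $\mathcal{O}_K(\bh_1)$ satisfies the difference condition \eqref{e21}, and apply Proposition~\ref{prop1} to force $U_K\in\{H,J,I\}$. Because $U$ is a single projective representation of the whole of $L$ (which by Cor.~\ref{cor2} acts transitively on $S_+$), these constraints must be simultaneously compatible across all such $K$; together with \eqref{comm}, which forbids $U$ from acting identically on the matrices attached to distinct generators, this pins down the admissible $U$ up to a change of basis. I would then substitute the rank-one form \eqref{eq:rankone} into the remaining instances of \eqref{eq:condunit}---those indexed by differences $\bh-\bh'$ not covered by \eqref{e21}, where accidental coincidences produce extra cross-terms---and check solvability. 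The expectation is that for all but one orbit per dimension these equations are overdetermined and force some $A_\bh=0$, contradicting the standing assumption $A_\bh\neq0$.

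The first obstacle is that the family of candidates is a priori infinite, owing to the continuous parameter $\v{v}$. The reduction I would use is that both the unitarity system and the isotropy constraints depend on $\v{v}$ only through the combinatorial type of the orbit---which pairwise differences $\bh-\bh'$ coincide, and which generators are fixed by which subgroups---and there are only finitely many such types for each $L$, collapsing the continuum to a finite check. The genuinely hard part is the representation-theoretic bookkeeping for the large groups $A_4$, $S_4$, and their direct products with $\mathbb{Z}_2$: one must match a faithful projective representation $U$ in $\mathbb{SU}(2)$ to the permutation action on a large orbit, verify \eqref{e21} on enough subgroups to determine $U_K$, and then show that the resulting transition matrices cannot jointly satisfy \eqref{eq:condunit}. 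It is precisely the collapse of this last step for every candidate except one distinguished orbit per dimension (the body-centered-cubic one in $d=3$) that leaves a unique admissible Cayley graph---and hence a unique primitive cell---in each of $d=1,2,3$, as depicted in Fig.~\ref{fig:Lattices}; these verifications are relegated to Appendix~\ref{app:excluding}.
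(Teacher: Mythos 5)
Your plan reproduces the architecture of the paper's actual proof: enumerate the finite subgroups $L<\mathbb{O}(3)$, stratify the continuum of vectors $\v{v}$ into finitely many classes according to which linear coincidences among orbit elements hold (the paper implements exactly this through the linear systems \eqref{system}, with Remark~\ref{rem1} used to cut down the number of systems), apply Prop.~\ref{prop1} as the representation-theoretic necessary condition, and finish the few surviving orbits by substituting the rank-one form \eqref{eq:rankone} into the residual unitarity conditions. One sharpening you miss, though it is implicit in your compatibility remark: the paper's workhorse for $A_4$, $S_4$ and their products with $\mathbb{Z}_2$ is not a case-by-case matching of $U_K$ but the blanket obstruction of Prop.~\ref{P3} --- a ternary subgroup $K\cong\mathbb{Z}_3$ whose orbit satisfies \eqref{e21} is impossible outright, since none of $H$, $J$, $\{I\}$ contains an element of order $3$ --- so that generically only the handful of exceptional solutions of \eqref{system} (the five polyhedra) survive to need individual treatment.

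There is, however, one step in your proposal that would fail if executed literally: discarding at the outset every orbit whose $\mathbb{Z}$-span is a proper sublattice. A full-rank sublattice is abstractly isomorphic to $\mathbb{Z}^3$, and the classification concerns Cayley graphs of the abstract group, so such orbits are perfectly admissible presentations; the paper accordingly \emph{verifies} that the candidate orbits ``are generating sets for some presentation of $\mathbb{Z}^3$'' rather than imposing an ambient-lattice condition. Worse, under the natural integer embedding the unique $d=3$ solution --- the tetrahedral orbit $S_+=\{\bh_1,\bh_2,\bh_3,\bh_4\}$ with $\bh_1+\bh_2+\bh_3+\bh_4=0$, arising from $D_2$ in Sec.~\ref{app:Z_2} --- spans the BCC lattice, a proper (index-$4$) sublattice of the cubic one, so your criterion would delete precisely the Weyl-walk graph; only rank deficiency is a legitimate immediate discard. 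A related imprecision: the uniqueness claimed by the proposition is of graphs and primitive cells, not of orbits or isotropy groups --- several distinct choices of $S_+$ (with or without inverses) and of $U_L$ share the same cell, as Fig.~\ref{fig:Lattices} records --- so ``one orbit per dimension'' should be weakened to ``one primitive cell per dimension.''
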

Throughout the present section, we solve the unitarity conditions in dimension $d=1,2,3$ for the Cayley graphs associated to the primitive cells shown in Fig. $\ref{fig:Lattices}$, and for all the possible isotropy groups. We remind that in general each isotropy group gives rise to a distinct presentation for $\mathbb{Z}^d$, possibly with the same first-neighbours structure. As discussed in Fig.~\ref{fig:Lattices}, different presentations can be in general associated to the same primitive cell (one can include in $S_+$ the inverses or not). We will now prove our main result, which is stated in Prop. \ref{mainresult} after the following derivation.

\begin{figure*}[t]
    {\includegraphics[width=.22\linewidth]{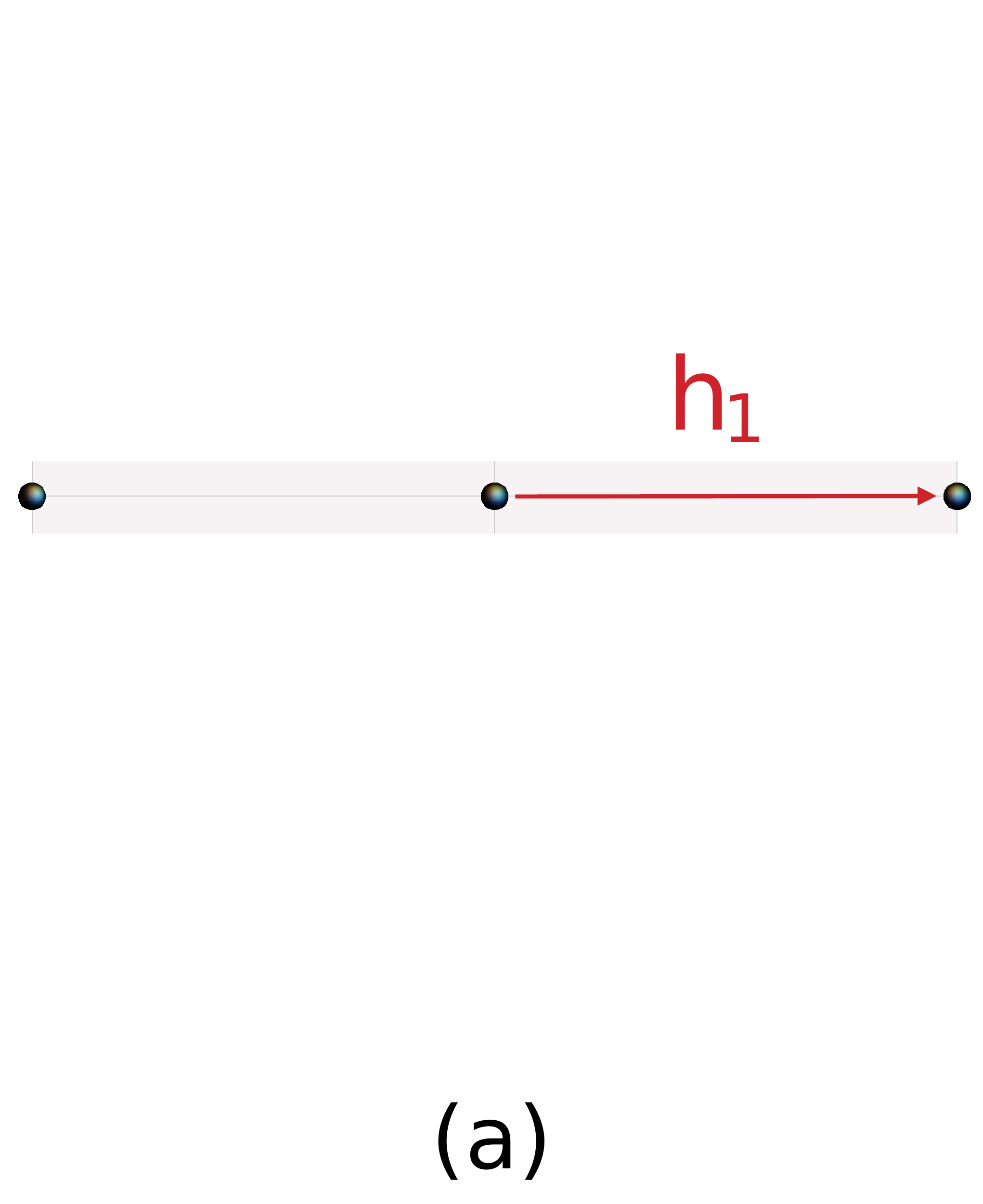}\phantomsubcaption}\qquad\qquad\qquad
    {\includegraphics[width=.22\linewidth]{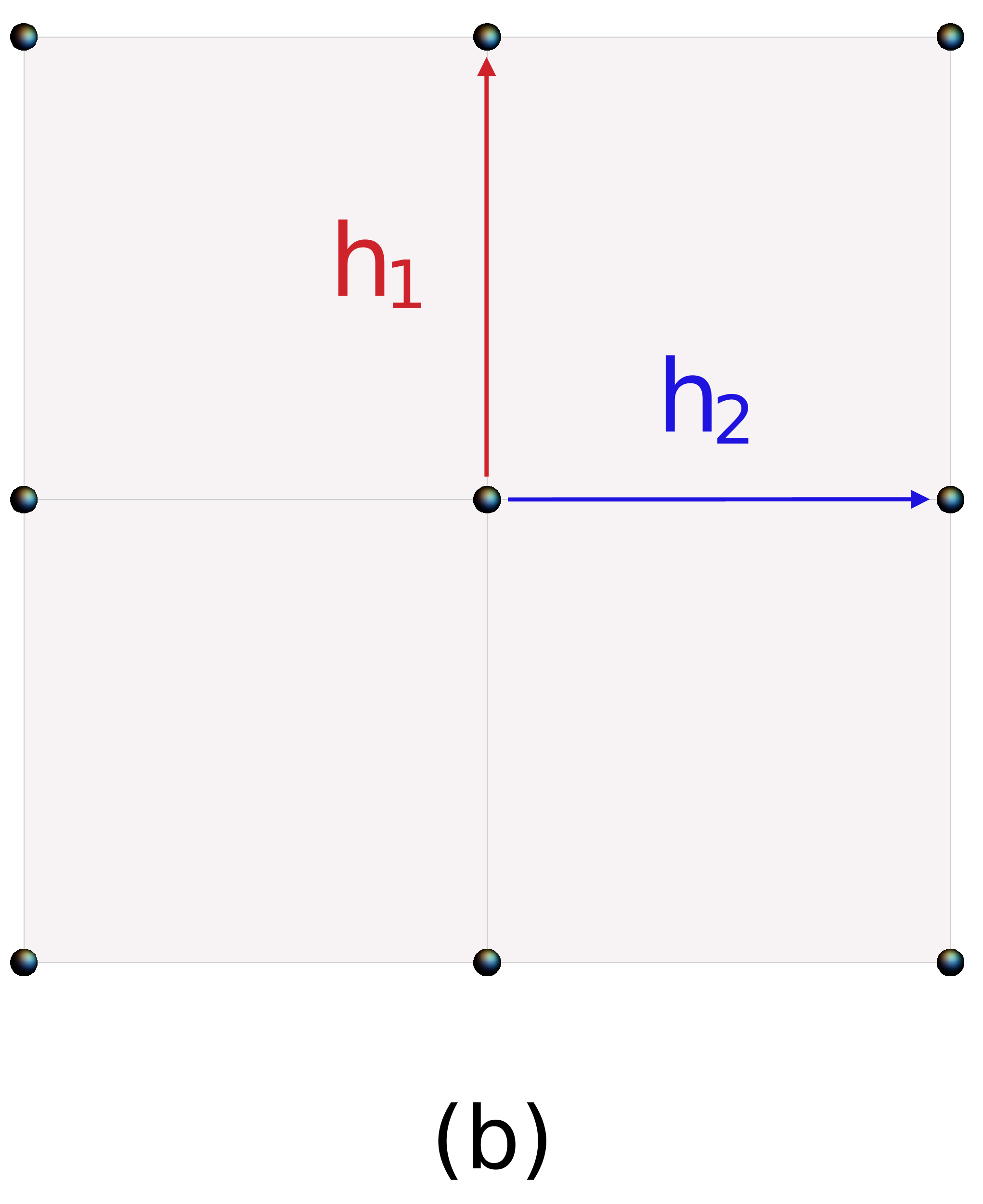}\phantomsubcaption}\qquad\qquad\qquad
    {\includegraphics[width=.22\linewidth]{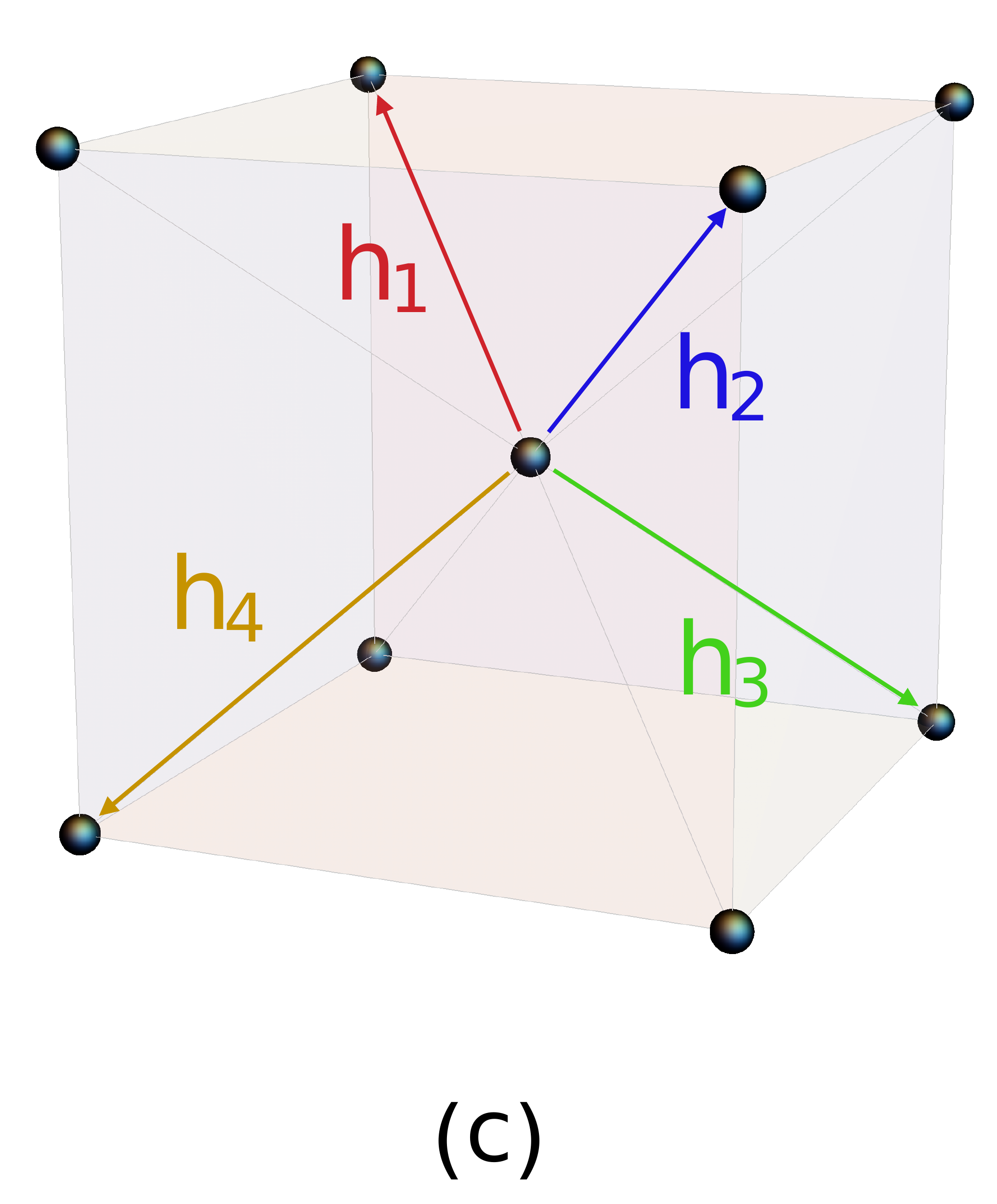}\phantomsubcaption}
 \caption{We report here the primitive cells of the unique graphs admitting isotropic QWs in dimensions $d=1,2,3$. Integer lattice (a): the isotropy groups can be $U_L= \{I\}$ and $U_L= \{I, i \sigma_X \}$, corresponding respectively to $S_+ = \{ \bh_1\}$ and $S_+ \equiv S_-= \{ \bh_1, -\bh_1\} $. Simple square lattice (b): the isotropy groups can be $U_L = \{ I, i\sigma_X \},\{ I, i\sigma_Z \}$ and $U_L = \{ I,i\sigma_X,i\sigma_Y, i\sigma_Z \}$, corresponding respectively to $S_+ = \{ \bh_1,\bh_2\}$ and $S_+ \equiv S_- = \{ \bh_1, \bh_2,-\bh_1, -\bh_2\} $. Body-centered cubic (BCC) lattice (c): the only possible isotropy group is $U_L = \{ I,i\sigma_X,i\sigma_Y, i\sigma_Z\}$, corresponding to $S_+ = \{ \bh_1,\bh_2,\bh_3,\bh_4\}$ with the nontrivial relator $\bh_1+\bh_2+\bh_3+\bh_4=0$. We notice that the case $d=1$ is the only one supporting the self-interaction, namely such that $A_e\neq 0$.}
  \label{fig:Lattices}
\end{figure*}

Before starting the derivation, we remind that in each case we can choose $|\eta_{\pm \bh_1}\>$ to be the eigenstates of $\sigma_Z$. Moreover, we will make use of Eq. \eqref{eq:rankone} to represent the transition matrices, reminding that $V_{\bh}=V_{-\bh}$. Finally, we recall that in Sec. \ref{sec:Zd} we showed that one can always impose condition \eqref{eq:invvac} and then multiply the transition matrices on the left by an arbitrary unitary commuting with the elements of the representation $U_L$.

\textbf{Case $d=1$.} We can write the transition matrices associated to $\pm \bh_1$ as
\begin{equation*}
A_{\bh_1} = \alpha_{+} V|\eta_{\bh_1}\> \< \eta_{\bh_1}|,\quad A_{-\bh_1} = \alpha_{-} V|\eta_{-\bh_1}\> \< \eta_{-\bh_1}|.
\end{equation*}
Multiplying on the right respectively by $A_{\v{h}_1}$ and $A_{-\v{h}_1}^\dagger$ the unitarity conditions
\begin{equation}\label{eq:unit1d}
\begin{split}
A_{\bh_1}A_e^\dagger + A_eA_{-\bh_1}^\dagger = 0, \\
A_e^\dagger A_{\bh_1} + A_{-\bh_1}^\dagger A_e = 0,
\end{split}
\end{equation}
one obtains
\begin{equation*}
A_{\pm \bh_1} A_e^\dagger  A_{\pm \bh_1} = 0,
\end{equation*}
which implies $A_e = V W$, where $W$ has vanishing diagonal elements in the basis $\{|\eta_{+\bh_1}\>, |\eta_{-\bh_1}\>\}$. Substituting into Eqs. \eqref{eq:unit1d}, one derives $\alpha_+ = \alpha_- \eqqcolon n$ and, up to a change of basis, $A_e = imV\sigma_X$ with $m\geq 0$. Imposing the normalization condition \eqref{eq:normalization} amounts to the relation $n^2 + m^2 = 1$. The admissible isotropy groups are $I$ and, up to a change of basis, $J_1$. Then, for $U_L = \{ I\}$, the transition matrices are given by:
\begin{equation*}
\begin{split}\label{transmat1d2}
&A_{\bh_1} = V \begin{pmatrix}
n & 0 \\ 0 & 0
\end{pmatrix},\quad A_{-\bh_1} = V \begin{pmatrix}
0 & 0 \\ 0 & n
\end{pmatrix}, \\
&A_e = V \begin{pmatrix}
0 & im \\ im & 0
\end{pmatrix},
\end{split}
\end{equation*}
where $V$ is an arbitrary unitary. For $U_L= \{I, i \sigma_X \}$, we impose condition \eqref{eq:invvac} and then $V$ can be taken as an arbitrary unitary commuting with $\sigma_X$.

\textbf{Case $d=2$.} The form of the transition matrices is:
\begin{equation*}
\begin{split}
& A_{\pm \bh_1} = \alpha_{\pm} V_{\bh_1}|\eta_{\pm \bh_1}\> \< \eta_{\pm \bh_1}|, \\
& A_{\pm \bh_2} = \alpha_{\pm} V_{\bh_2}|\eta_{\pm \bh_2}\> \< \eta_{\pm \bh_2}|.
\end{split}
\end{equation*}
Multiplying on the right by $A_{\v{h}_1}$ the unitarity conditions
\begin{equation}\label{eq:unit2d}
A_{\bh_1}A_{\pm \bh_2}^\dagger + A_{\mp\bh_2}A_{-\bh_1}^\dagger = 0,
\end{equation}
one obtains
\begin{equation*}
A_{\bh_1} A_{\pm \bh_2}^\dagger  A_{\bh_1} = 0.
\end{equation*}
The latter implies either i) $|\eta_{\pm \bh_1}\> = |\eta_{\pm \bh_2}\>$ or ii) $|\eta_{\pm \bh_1}\> = |\eta_{\mp \bh_2}\>$ and that, in both cases, one can choose $V_{\bh_1} = V_{\bh_2}(i\sigma_Y)$ up to a change of basis. In either cases, substituting into Eqs.~\eqref{eq:unit2d} one derives $\alpha_+ = \alpha_- \eqqcolon \alpha$ and, from the normalization condition \eqref{eq:normalization}, $\alpha = \frac{1}{\sqrt{2}}$. Redefining $V \coloneqq V_{\bh_2}$, in case i) one obtains the following family of transition matrices:
\begin{equation}\label{eq:transmat2d}
\begin{split}
A_{\pm \bh_1} = \pm \alpha V|\eta_{\mp \bh_1}\> \< \eta_{\pm \bh_1}|, \\
A_{\pm \bh_2} = \alpha V|\eta_{\pm \bh_1}\> \<  \eta_{\pm\bh_1}|.
\end{split}
\end{equation}
The second family, namely case ii), is connected to the first one via the exchange $\bh_2 \leftrightarrow -\bh_2$. One can check that the self-interaction term $T_e \otimes A_e$ is not supported by the unitarity conditions
\begin{equation*}
	A_{\bh} A_e^\dagger + A_e A_{-\bh}^\dagger = A_{\bh}^\dagger A_e + A_e^\dagger A_{-\bh} =0\quad \forall \bh\in S,
\end{equation*}
namely $A_e=0$. Imposing Eq. \eqref{eq:invvac}, one can choose
\begin{equation*}
V=\frac{1}{\sqrt{2}} \begin{pmatrix}
1 & 1 \\ -1 & 1
\end{pmatrix}
\end{equation*}
and then multiply the transition matrices by a unitary commuting with the representation $U_L$. The isotropy group can be either $J_2 \equiv \{ I, i\sigma_Z \}$ or $H$ for the first family of walks, while either $J_1 = \{ I, i\sigma_X \}$ or $H$ for the second one. Thus the first family is given by
\begin{equation*}
\begin{split}
A_{\bh_1} & = \frac{1}{2} V\begin{pmatrix}
1 & 0 \\
1 & 0 
\end{pmatrix},\quad A_{-\bh_1} = \frac{1}{2} V\begin{pmatrix}
0 & -1 \\
0 & 1 
\end{pmatrix}, \\
A_{\bh_2} & = \frac{1}{2} V\begin{pmatrix}
1 & 0 \\
-1 & 0 
\end{pmatrix},\quad A_{-\bh_2} = \frac{1}{2} V\begin{pmatrix}
0 & 1 \\
0 & 1 
\end{pmatrix},
\end{split}
\end{equation*}
where $V$ is either an arbitrary unitary commuting with $\sigma_Z$ or $V=I$, while the second family of transition matrices is obtained exchanging $\bh_2 \leftrightarrow -\bh_2$ and taking $V$ as either an arbitrary unitary commuting with $\sigma_X$ or $V=I$.

\textbf{Case $d=3$.} The isotropy requirement can be fulfilled with $U_L=H$. At least one of the two conditions of
Eqs.~\eqref{eq:adag} or \eqref{eq:a} must be fulfilled for any nontrivial $l\in L$. Since
Eq. (\ref{eq:a}) cannot be satisfied for $U_l = i\sigma_Z$, then it must be 
$\<\eta_{\bh_1}|V_{\bh_1}^\dag\sigma_ZV_{\bh_1}|\eta_{\bh_1}\>=0$. This implies
\begin{equation}
 \Tr[V_{\bh_1}^\dag\sigma_ZV_{\bh_1}\sigma_Z]=0.
 \label{eq:trnu}
\end{equation}
Writing $V_{\bh_1}$ in the general unitary form 
\begin{equation*}
  V_{\bh_1}=\theta
  \begin{pmatrix}
    \mu&-\nu^*\\
    \nu&\mu^*
  \end{pmatrix},
\end{equation*}
where $|\theta|^2=|\mu|^2+|\nu|^2=1$, the condition in Eq. \eqref{eq:trnu}
implies $|\mu|=|\nu|=2^{-1/2}$, and using the polar decomposition \eqref{eq:rankone} of
$A_{\pm\bh_1}$ we obtain
\begin{equation}\label{eq:transmatr3d}
  A_{\bh_1}=\frac {\alpha_+}{\sqrt2}
  \begin{pmatrix}
    \phi&0\\
    \psi&0
  \end{pmatrix},\quad 
  A_{-\bh_1}=\frac {\alpha_-}{\sqrt2}
  \begin{pmatrix}
    0&-\psi^*\\
    0&\phi^*
  \end{pmatrix},
\end{equation}
with $\phi,\psi$ phase factors. Using isotropy, namely considering the orbit of the above matrices under conjugation with $H$, we obtain
\begin{align}\label{eq:transmatr3d2}
\begin{split}
  &A_{\bh_2}=\frac {\alpha_+}{\sqrt2}
  \begin{pmatrix}
    0&\psi\\
    0&\phi
  \end{pmatrix},\quad
  A_{-\bh_2}=\frac {\alpha_-}{\sqrt2}
  \begin{pmatrix}
    \phi^*&0\\
    -\psi^*&0
  \end{pmatrix},\\
  &A_{\bh_3}=\frac {\alpha_+}{\sqrt2}
  \begin{pmatrix}
    0&-\psi\\
    0&\phi
  \end{pmatrix},\quad
  A_{-\bh_3}=\frac {\alpha_-}{\sqrt2}
  \begin{pmatrix}
    \phi^*&0\\
    \psi^*&0
  \end{pmatrix},\\
  &A_{\bh_4}=\frac {\alpha_+}{\sqrt2}
  \begin{pmatrix}
    \phi&0\\
    -\psi&0
  \end{pmatrix},\quad
  A_{-\bh_4}=\frac {\alpha_-}{\sqrt2}
  \begin{pmatrix}
    0&\psi^*\\
    0&\phi^*
  \end{pmatrix}.
\end{split}
\end{align}
Also in this case, the self-interaction term is not supported by the unitarity conditions. Finally, we can write the matrix $A_\bk$ in Eq. \eqref{Ak} as
\begin{equation*}
A_\bk =\sum_{i=1}^4 (A_{\bh_i}e^{ik_i}+A_{-\bh_i}e^{-ik_i})
\end{equation*}
and imposing unitarity of $A_\bk$ for every $\bk$, one obtains the following conditions
\begin{equation*}
  \alpha_+^2=\alpha_-^2=\frac14,\ \phi^{*2}+\phi^2=\psi^{*2}+\psi^2=0,
\end{equation*}
namely 
\begin{equation*}
\phi,\psi\in\left\{\pm\zeta^+\coloneqq \pm\frac{1+ i}{\sqrt{2}}, \pm\zeta^-\coloneqq \pm\frac{1- i}{\sqrt{2}}\right\}.
\end{equation*}
The different choices of the overall signs for $\phi,\psi$ are connected to each other by an overall phase factor and by unitary conjugation by $\sigma_Z$. Then we can fix then choosing the plus signs. The choices $\phi=\zeta^\pm,\psi=\zeta^\mp$ are equivalent to $\phi=\psi=\zeta^\pm$ via conjugation of the former by $e^{\pm i\tfrac\pi4\sigma_Z}$ and an exchange $\bh_1 \leftrightarrow \bh_4$. Accordingly, the QWs found are given by the transition matrices of Eqs. \eqref{eq:transmatr3d} and \eqref{eq:transmatr3d2} with $\psi = \varphi = \zeta^{\pm}$, namely the two Weyl QWs presented in Ref.~\cite{DP14}.

We have thus proved the following main result.
\begin{proposition}[Classification of the isotropic QWs on lattices of dimension $d=1,2,3$ with a coin system of dimension $s=2$]\label{mainresult}
Let $S=S_+\cup S_-\cup \{e\}$ denote a set of generators for $\mathbb{Z}^d$ and let $\{ A_{\bh} \}_{\bh\in S}$ denote the set of transition matrices of a QW on $\mathbb{Z}^d$ with a coin system of dimension $s=2$ and isotropic on $S_+$. Then for each $d=1,2,3$ the admissible graphs are unique (see Fig. \ref{fig:Lattices}) and one has the following:
\begin{itemize}
\item[a)] Case $d=1$:
\begin{align*}
\begin{split}
&A_{\bh_1} = V \begin{pmatrix}
n & 0 \\ 0 & 0
\end{pmatrix},\quad A_{-\bh_1} = V \begin{pmatrix}
0 & 0 \\ 0 & n
\end{pmatrix}, \\
&A_e = V \begin{pmatrix}
0 & im \\ im & 0
\end{pmatrix},
\end{split}
\end{align*}
where $n,m$ are real such that $n^2+m^2=1$, and $V$ is an arbitrary unitary if $S_+= \{\bh_1\}$ or $V$ is a unitary commuting with $\sigma_X$ if $S_+= \{\bh_1,-\bh_1\}$.
\item[b)] Case $d=2$: one has $A_e=0$ and
\begin{align*}
\begin{split}
&A_{\bh_1}  = \frac{1}{2} V\begin{pmatrix}
1 & 0 \\
1 & 0 
\end{pmatrix},\quad A_{-\bh_1} = \frac{1}{2} V\begin{pmatrix}
0 & -1 \\
0 & 1 
\end{pmatrix}, \\
&A_{\bh_2}  = \frac{1}{2} V\begin{pmatrix}
0 & 1 \\
0 & 1 
\end{pmatrix},\quad A_{-\bh_2} = \frac{1}{2} V\begin{pmatrix}
1 & 0 \\
-1 & 0 
\end{pmatrix},
\end{split}
\end{align*}
where $V$ is a unitary commuting with $\sigma_X$ if $S_+= \{\bh_1,\bh_2\}$ or $V=I$ if $S_+= \{\bh_1,\bh_2,-\bh_1,-\bh_2\}$.
\item[c)] Case $d=3$: one has $A_e=0$ and
\begin{align*}
\begin{split}
  &A_{\bh_1}= 
  \begin{pmatrix}
    \eta^{\pm}&0\\
    \eta^{\pm}&0
  \end{pmatrix},\quad
  A_{-\bh_1}= 
  \begin{pmatrix}
    0&-\eta^{\mp}\\
    0&\eta^{\mp}
  \end{pmatrix},\\
  &A_{\bh_2}= 
  \begin{pmatrix}
    0&\eta^{\pm}\\
    0&\eta^{\pm}
  \end{pmatrix},\quad
  A_{-\bh_2}= 
  \begin{pmatrix}
    \eta^{\mp}&0\\
    -\eta^{\mp}&0
  \end{pmatrix},\\
  &A_{\bh_3}= 
  \begin{pmatrix}
    0&-\eta^{\pm}\\
    0&\eta^{\pm}
  \end{pmatrix},\quad
  A_{-\bh_3}= 
  \begin{pmatrix}
    \eta^{\mp}&0\\
    \eta^{\mp}&0
  \end{pmatrix},\\
  &A_{\bh_4}= 
  \begin{pmatrix}
    \eta^{\pm}&0\\
    -\eta^{\pm}&0
  \end{pmatrix},\quad
  A_{-\bh_4}= 
  \begin{pmatrix}
    0&\eta^{\mp}\\
    0&\eta^{\mp}
  \end{pmatrix}.
\end{split}
\end{align*}
where $\eta^{\pm} = \frac{1\pm i}{4}$ and $S_+ = \{\bh_1,\bh_2,\bh_3,\bh_4\}$ with the nontrivial relator $\bh_1+\bh_2+\bh_3+\bh_4=0$.
\end{itemize}
\end{proposition}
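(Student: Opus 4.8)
The plan is to solve the unitarity conditions \eqref{eq:normalization}--\eqref{eq:condunit} directly on each of the three Cayley graphs, which by the preceding proposition are the integer, simple square, and BCC lattices of Fig.~\ref{fig:Lattices}. Throughout I would exploit the rank-one polar form \eqref{eq:rankone} already established, choose $\{|\eta_{\pm\bh_1}\rangle\}$ to be the eigenbasis of $\sigma_Z$ without loss of generality, and use $V_\bh=V_{-\bh}$ together with the gauge \eqref{eq:invvac}. The key simplification is combinatorial: for each lattice, reading off which vector differences $\bh-\bh'$ coincide tells me exactly which cross-terms in \eqref{eq:condunit} are nontrivial, so each dimension reduces to a short, self-contained computation.

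For $d=1$ the only collisions involve the self-loop $e$, giving Eqs.~\eqref{eq:unit1d}. Multiplying these on the right by $A_{\pm\bh_1}$ yields $A_{\pm\bh_1}A_e^\dag A_{\pm\bh_1}=0$, which forces $A_e$ to be off-diagonal in the $\sigma_Z$ basis; substituting back equates $\alpha_+=\alpha_-\eqqcolon n$ and pins $A_e=imV\sigma_X$, with the normalization \eqref{eq:normalization} imposing $n^2+m^2=1$. For $d=2$ the relevant collisions produce Eqs.~\eqref{eq:unit2d}; multiplying by $A_{\bh_1}$ gives $A_{\bh_1}A_{\pm\bh_2}^\dag A_{\bh_1}=0$, which forces the $\eta$-vectors of the two directions to agree up to the $\pm$ relabelling and $V_{\bh_1}=V_{\bh_2}(i\sigma_Y)$. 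This fixes $\alpha_\pm=1/\sqrt2$, the self-loop is annihilated by its own cross-conditions ($A_e=0$), and \eqref{eq:invvac} determines $V$ up to the residual unitary commuting with $U_L$.

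The laborious case is $d=3$ on the BCC lattice with $U_L=H$, and this is where I expect the real work. First, since condition \eqref{eq:a} cannot hold for $U_l=i\sigma_Z$, condition \eqref{eq:adag} must, giving $\Tr[V_{\bh_1}^\dag\sigma_Z V_{\bh_1}\sigma_Z]=0$; parametrizing $V_{\bh_1}$ in general unitary form this forces $|\mu|=|\nu|=2^{-1/2}$ and hence the two-phase form \eqref{eq:transmatr3d} of $A_{\pm\bh_1}$. Isotropy, i.e.\ conjugating by the three nontrivial elements of $H$, then generates $A_{\pm\bh_2},A_{\pm\bh_3},A_{\pm\bh_4}$ as in \eqref{eq:transmatr3d2}. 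The main obstacle is imposing unitarity of the full Fourier symbol $A_\bk$ for every $\bk\in B$: because the four generators are linearly dependent through the relator $\bh_1+\bh_2+\bh_3+\bh_4=0$, the $\bk$-dependent cross-terms must cancel among themselves, and extracting these cancellations yields $\alpha_\pm^2=1/4$ together with the phase constraints $\phi^{*2}+\phi^2=\psi^{*2}+\psi^2=0$, i.e.\ $\phi,\psi\in\{\pm\zeta^+,\pm\zeta^-\}$. Finally I would quotient the apparent multiplicity of sign and phase choices by the available discrete symmetries --- overall phase, conjugation by $\sigma_Z$, and the exchange $\bh_1\leftrightarrow\bh_4$ combined with conjugation by $e^{\pm i\pi\sigma_Z/4}$ --- reducing everything to the two inequivalent solutions $\phi=\psi=\zeta^\pm$, which are precisely the left- and right-handed Weyl QWs of Ref.~\cite{DP14}.
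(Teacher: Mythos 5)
Your proposal is correct and follows essentially the same route as the paper's own derivation in Sec.~\ref{Sec:Derivation}: the same rank-one polar form \eqref{eq:rankone} with $V_{\bh}=V_{-\bh}$ and the gauge \eqref{eq:invvac}, the same cross-term multiplications yielding Eqs.~\eqref{eq:unit1d} and \eqref{eq:unit2d} for $d=1,2$, and for $d=3$ the same chain --- failure of \eqref{eq:a} for $U_l=i\sigma_Z$ forcing \eqref{eq:adag} and the trace condition \eqref{eq:trnu}, the $H$-orbit construction of \eqref{eq:transmatr3d2}, unitarity of $A_\bk$ giving $\alpha_\pm^2=1/4$ and $\phi^2+\phi^{*2}=\psi^2+\psi^{*2}=0$, and the quotient by the overall phase, $\sigma_Z$-conjugation, and the $e^{\pm i\frac{\pi}{4}\sigma_Z}$-conjugation with $\bh_1\leftrightarrow\bh_4$ exchange --- arriving at the two Weyl QWs, with graph uniqueness correctly delegated to the appendix as in the paper.
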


\section{Conclusions}\label{Sec:conclu}
In this paper we presented a complete classification of the isotropic quantum walks on lattices of dimension $d=1,2,3$ with coin dimension $s=2$. We have extended the isotropy definition of Ref.~\cite{DP14}, to account for groups with generators of different orders. We introduced a technique to construct the Cayley graphs of a given group $G$ satisfying a relevant necessary condition for isotropy. This allowed us to exclude an infinite class of Cayley graphs of $\mathbb{Z}^d$. The technique is sufficiently flexible to be used in the future for other generally non Abelian groups. Remarkably, the Cayley graph is unique for each dimension $d=1,2,3$ and for $d=3$ the only admissible QWs are the two Weyl QWs presented in Ref.~\cite{DP14}.
The use of isotropy since the very beginning has made the solution of the unitarity equations significantly shorter. Moreover, we eliminated the superfluous technical assumption used in Ref.~\cite{DP14} mentioned in the Introduction. In consideration of the length of the derivation from informational principles of the Weyl equation in Ref.~\cite{DP14}, the present derivation constitutes a thoroughly independent check. Finally, this result represents the extension of the classification of Ref.~\cite{PhysRevA.93.062334}.
\section*{Acknoledgments}
This publication was made possible through the support of a grant from the John Templeton Foundation, ID \# 60609 ``Quantum Causal Structures''. The opinions expressed in this publication are those of the authors and do not necessarily reflect the views of the John Templeton Foundation.

\appendix

\section{Excluding Cayley graphs}\label{app:excluding}
In Secs. \ref{app:AS_4}--\ref{app:Z_2} we will exclude the infinite family of graphs arising from the following finite isotropy groups $L< \mathbb{O}(3)$:
\begin{enumerate}
\item\label{i1} $A_4$, $S_4$ and their direct product with $\mathbb{Z}_2$ (except for the cases in item \ref{i2});
\item\label{i2} the special instances of item \ref{i1} where the orbits contain the vertices of a truncated tetrahedron;
\item$\mathbb{Z}_n,D_{n}$ for $n=3,4,6$ and their direct product with $\mathbb{Z}_2$;
\item one special instance arising from $D_2$, $D_2 \times \mathbb{Z}_2$.
\end{enumerate}

\subsection{Excluding $A_4$- and $S_4$-symmetric Cayley graphs}\label{app:AS_4}
In this subsection we use the convention that unwritten matrix elements are zero. In Secs.~\ref{app:A_4} and \ref{app:S_4} we will consider the orbit of an arbitrary three-dimensional vector $\v{v}= (\alpha, \beta, \gamma)^T $ under the action of the finite groups $L\cong A_4,S_4$ in $\mathbb{O}(3)$. To this purpose, as discussed in Sec.~\ref{sec:presentations}, we will use the real, orthogonal and three-dimensional faithful representations of $L$, identifying its representation with the group itself. In the present case of $L \cong A_4,S_4$, such representations coincide with the irreducible ones, since the reducible ones cannot be faithful (otherwise they would have orthogonal blocks of dimension at most 2, but $A_4,S_4$ are not subgroups of $\mathbb{O}(2)$).

We denote with $\mathcal{O}_{L}(\v{v})$  the family of orbits of $\v{v}$ under the action of $L$, parametrized by $\alpha,\beta,\gamma$. Each orbit satisfies a necessary condition to give rise to an isotropic presentation for $\mathbb{Z}^d$ for $d=1,2,3$.

\begin{proposition}\label{P3}
If $L$ contains a ternary subgroup $K\cong\mathbb{Z}_3$ such that for $\bh_i,\bh_j\in \mathcal{O}_{K}(\v{v})$ and 
$\bh_l,\bh_m\in \mathcal{O}_{L}(\v{v})$,  the condition in Eq. (\ref{e21}) is satisfied,  then the set of vertices $\mathcal{O}_{L}(\v{v})$ cannot satisfy  the necessary conditions \eqref{eq:diaga},\eqref{eq:diagb} for unitarity. 
\end{proposition}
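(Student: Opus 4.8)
The plan is to argue by contradiction: I assume that the orbit $\mathcal{O}_L(\v{v})$, used as the generating set $S_+$, does admit transition matrices satisfying the necessary unitarity conditions \eqref{eq:diagb} and \eqref{eq:diaga}, and then derive an impossibility from the structure of $K\cong\mathbb{Z}_3$. The crucial observation is that the hypothesis---that condition \eqref{e21} holds for the pairs $\bh_i,\bh_j\in\mathcal{O}_K(\v{v})$ and $\bh_l,\bh_m\in\mathcal{O}_L(\v{v})$---is exactly the premise under which the reduction of Sec.~\ref{Sec:Derivation} was carried out. Hence I may apply verbatim the machinery that led to Prop.~\ref{prop1}, now with the additional information $|K|=3$.

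First I would rerun that reduction on the subgroup $K$: multiplying \eqref{eq:diagb} by $A^\dagger_{\bh_j}$ on the left or by $A_{\bh_1}$ on the right yields $A^\dagger_{\bh_j}A_{\bh_1}A^\dagger_{\bh_j}=A_{\bh_1}A^\dagger_{\bh_j}A_{\bh_1}=0$, the cross terms dropping out because $V_{\bh_1}=V_{-\bh_1}$ and $\<\eta_{-\bh_1}|\eta_{\bh_1}\>=0$. Inserting the isotropy relation $A_{\bh_j}=U_kA_{\bh_i}U_k^\dagger$ together with the rank-one polar form \eqref{eq:rankone}, and taking $|\eta_{\pm\bh_1}\>$ to be the eigenstates of $\sigma_Z$, these collapse to $\<\eta_{\bh_1}|V^\dagger_{\bh_1}U^\dagger_kV_{\bh_1}|\eta_{\bh_1}\>\<\eta_{\bh_1}|U_k|\eta_{\bh_1}\>=0$, so that for every nontrivial $k\in K$ at least one of \eqref{eq:adag} or \eqref{eq:a} must hold.

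The decisive step is then to feed these constraints into the $\mathbb{SU}(2)$ parametrisation $U_k=\cos\theta\,I+i\sin\theta\,\bn_k\cdot\boldsymbol\sigma$, with $\bn_k$ a unit vector. Either of \eqref{eq:adag}, \eqref{eq:a} forces the real part $\cos\theta$ to vanish, whence $U_k=i\bn_k\cdot\boldsymbol\sigma$ for each nontrivial $k\in K$. But such an operator squares to $-I$, so its image $[U_k]$ in $\mathbb{SU}(2)/U(1)\cong\mathbb{SO}(3)$ is a $\pi$-rotation of order at most two; equivalently $U_k^3=-i\bn_k\cdot\boldsymbol\sigma$ is traceless and cannot be a phase multiple of $I$. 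This is incompatible with $k$ having order three, since the projective representation must obey $U_k^3\propto U_{k^3}=U_e= I$. (Alternatively, one may invoke Prop.~\ref{prop1} directly: it forces $U_K\in\{H,J,I\}$, none of which is a faithful projective image of $\mathbb{Z}_3$, while the trivial choice $U_K=I$ is barred by the homogeneity requirement \eqref{comm}.) The contradiction shows that no transition matrices attached to $\mathcal{O}_L(\v{v})$ can satisfy \eqref{eq:diagb},\eqref{eq:diaga}, which is the claim.

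I expect the only delicate point to be the bookkeeping behind the reduction itself: one must verify that condition \eqref{e21} genuinely singles out exactly the two surviving terms $A_{\bh_1}A^\dagger_{\bh_j}$ and $A_{-\bh_j}A^\dagger_{-\bh_1}$ in the unitarity sum \eqref{eq:condunit}, so that \eqref{eq:diagb} and \eqref{eq:diaga} are the correct reduced equations for the orbit at hand, and to track the undetermined projective phases consistently through the $\mathbb{SU}(2)$ argument. Everything past that point is the short computation above.
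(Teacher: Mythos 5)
Your proof is correct and follows essentially the same route as the paper: the paper's own proof is a one-line appeal to Prop.~\ref{prop1}, whose preceding derivation already records that $K\cong\mathbb{Z}_3$ is inconsistent with Eqs.~\eqref{eq:adag} and \eqref{eq:a}, so that none of the admissible images $H$, $J$, $I$ contains an order-three element. You simply unpack that derivation, and your explicit contradiction---each nontrivial $k\in K$ moving $\bh_1$ forces $U_k=i\bn_k\cdot\boldsymbol\sigma$, whence $U_k^3=-i\bn_k\cdot\boldsymbol\sigma$ is traceless and cannot be proportional to $I$ as a projective representation of an order-three element requires---is precisely the content behind the paper's remark.
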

\Proof  By Prop.~\ref{prop1}, $K$ has to be a subgroup of the Heisenberg group $H$. However $H$ does not contain ternary subgroups. \qed

We will make use of Prop.~\ref{P3} to exclude an infinite family of presentations arising from $L\cong A_4,S_4$. Since by Eq. \eqref{e21} we are interested in sums or differences of generators, the cases $L \cong A_4\times \mathbb{Z}_2,S_4\times \mathbb{Z}_2$ are already accounted: their irreducible representations just add the inversion to the irreducible ones of $A_4,S_4$.

The groups $L$ contain four isomorphic copies of $\mathbb{Z}_3$ (see Subsecs. \ref{app:A_4},\ref{app:S_4}). Let us denote with $D$ the generator of one of this cyclic subgroups. The content of Eqs. \eqref{e21} for a fixed choice of $i,j$ translates to the following. Suppose that for all $A,B\in L_0:=\{ 0\in M_3(\mathbb{R})\}\cup L$ one has:
\begin{equation}\label{difference}
(I-D)\v{v} = s(A+tB)\v{v}\Leftrightarrow (sA\v{v}=\v{v})\vee(stB\v{v}=\v{v}), 
\end{equation}
 ($s,t$ signs). Our strategy is now to solve the necessary conditions for the violation of \eqref{difference}, consisting in systems of the form
\begin{equation}\label{system}
\forall A,B\in L_0,\;(I-D-s(A+tB))\v{v}=0.
\end{equation}
These will produce some solutions $\v{v}_0$. Then we can choose another vector in $\mathcal{O}_{L}(\v{v}_0)$, impose again Eq. \eqref{system}, and iterate until we end up either with the trivial solution, or with a  system of linear equations for $\alpha, \beta, \gamma$. By Prop. \ref{P3}, the only $A_4$- or $S_4$-symmetric Cayley graphs of $\mathbb Z^3$ for which the unitarity conditions may be satisfied must then be found among the non-trivial solutions of the above systems. Since the condition \eqref{system} is only necessary, we need to check whether the solutions actually violate condition~\eqref{difference}. The remaining differences $(D-D^2)\v{v}$ and $(D^2-I)\v{v}$ are the orbit of $(I-D)\v{v}$ under $D$, then we can just solve~\eqref{system} and check~\eqref{difference}.

In the following we will show that \eqref{difference} has only trivial solutions for $A,B\in L$, except for the special case where $\v v=\alpha(3,1,1)^T$, that will be treated separately in Subsec.~\ref{app:ExclusionA4}. At the end of Subsec.~\ref{app:S_4} we will then prove the same result in the case of $B=0$.

It is useful to notice the following:
\begin{remark}\label{rem1}
$\v{v}_1\in \mathbb{R}^3$ solves
\begin{equation*}
(I-D-s(A+tB))\v{v}_1=0
\end{equation*}
iff $\v{v}_2 \coloneqq F_2^{-1}\v{v}_1$ solves
\begin{equation*}
F_1(I-D-s(A+tB))F_2\v{v}_2=0,
\end{equation*}
for some arbitrary $F_1,F_2\in\mathbb{GL}(3,\mathbb{R})$.
In particular, this is relevant in the case $F_2\in L$, because it means that the orbits generated by the two solutions 
$\v{v}_1, \v{v}_2$ coincide.\end{remark}
This remark will allow us to considerably reduce the number of systems we have to solve. In the following we will refer to a particular solution for \eqref{system} indifferently with: 1) the solution vector $\v{v}_0$, or 2) the lattice which $\v{v}_0$ gives rise to, or 3) the polyhedron whose vertices are the elements of $\mathcal{O}_{L}(\v{v}_0)$, or 4) any other vector in $\mathcal{O}_{L}(\v{v}_0)$, or finally 5) the orbit $\mathcal{O}_{L}(\v{v}_0)$. The cases we will end up with are the following:
\begin{enumerate}
	\item The simple cubic lattice, generated orbiting $\v{v}_{s}= \alpha (1,0,0)^T $ under $A_4$: its vertices are all the signed permutations of the coordinates of $\v{v}_s$.
	\item The BCC lattice, generated orbiting $\v{v}_b =\alpha (1,-1,-1)^T$ under $S_4$: its vertices are all the signed permutations of the coordinates of $\v{v}_b$.
	\item The cuboctahedron, whose vertices are all the signed permutations of the coordinates of $\v{v}_c = \alpha (1,-1,0)^T $ and are generated by orbiting $\v{v}_c$ under $A_4$.
	\item The truncated tetrahedron, whose vertices are all the permutations with an even number of minus signs of the coordinates of $\v{v}_{tt}= \alpha (3,1,1)^T $ and are generated by orbiting $\v{v}_{tt}$ under $A_4$; in addition, one can also find the solution including the inverses, which is given by $\mathcal{O}_{S_4}(\v{v}_{tt})$.
	\item The truncated octahedron, whose vertices are all the signed permutations of the coordinates of $\v{v}_{to} = \alpha (1,-2,0)^T $ and are generated by orbiting $\v{v}_{to}$ under $S_4$.
\end{enumerate}
One can easily check that $\mathcal{O}_L(\v{v}_0)$ for the five cases above actually are generating sets for some presentation of $\mathbb{Z}^3$.

In the following, we will choose $D=R$ with $R(x,y,z)^T=(z,x,y)^T$ ($R$ is contained in the representation of both $A_4$ and $S_4$). As a consequence, we can consider $A\neq B$, since otherwise there are two possible cases:
\begin{enumerate}
	\item $(I-R)\v{v} = \pm 2A\v{v}$, implying $(A^{-1}-A^{-1}R)\v{v} = \pm 2\v{v}$. Since $A,R\in\mathbb{O}(3)$, by the triangle inequality it must be
	\begin{equation*}
	A^{-1}\v{v} = \pm \v{v},\ A^{-1}R\v{v} = \mp \v{v},
	\end{equation*}
	and in particular $\v{v} = -R\v{v}$ holds. This implies $\v{v} = (0,0,0)^T $.
	\item $(I-R)\v{v} = 0$, implying $\v{v} = \alpha (1,1,1)^T $.
\end{enumerate}

Finally, the reader can check that for $\v{v}_0 \in\{\v{v}_{s}, \v{v}_{b}, \v{v}_{c}, \v{v}_{to}\}$ condition (\ref{difference}) is not violated,
thus excluding the cases of $S_+=\mathcal O_L(\v v_0)$ by virtue of Prop.~\ref{P3}.

\subsubsection{Excluding $A_4$-symmetric Cayley graphs}\label{app:A_4}
$A_4$ has a unique three-dimensional real irreducible representation, generated by the matrices:
\begin{equation}\label{A4rep}
X_1 = \begin{pmatrix}
1&0&0 \\ 0&-1&0 \\ 0&0&-1
\end{pmatrix},\ R = \begin{pmatrix}
0&0&1 \\ 1&0&0 \\ 0&1&0
\end{pmatrix}.
\end{equation}
We define
\begin{equation*}
X_0 = I,\ X_2 = RX_1R^{-1},\ X_3 = R^2X_1R^{-2}.
\end{equation*}
The group contains four isomorphic copies of $\mathbb{Z}_3$, generated respectively by the elements of the set $\{ R, X_1R, X_2R, X_3R \}$ (these are cyclic signed permutations of the coordinates).

We now choose the subgroup generated by $R$ and consider the difference $(I-R)\v{v}$, setting the condition \eqref{system} for any $A,B\in A_4$. Each of these define linear systems of three equations for $\v{v}$.
If $A$ equals $I$ or $R$, then it is easy to see that $\exists G \in A_4$ such that $G\v{v} = s \v{v}$ ($s$ a sign): this implies that either $\v{v}= (0,\beta,\gamma)^T $ up to signed permutations, or $\mathcal{O}_{A_4}(\v{v})=\mathcal{O}_{A_4}(\v{v}_b)$. The latter case was excluded in Subsec.~\ref{app:AS_4}. The remaining cases are then i) $A,B\not\in\{I,R\}$ or ii) $\v v=(0,\beta,\gamma)^T$ and signed permutations.
Case (ii), however, will appear as a special instance of (i).
In case (i), we have six cases for $s(A+tB)$:
\begin{enumerate}
 \item $s(X_i+tX_j) = \begin{pmatrix}
 2s&& \\ &\pm \xi& \\ &&0
 \end{pmatrix}$, modulo permutations of the diagonal elements, with arbitrary sign $s$ and for $\xi \coloneqq 0,2$.
 \item $s(X_i+tX_jR) = \begin{pmatrix}
 s_1&0&t_1 \\ t_2&s_2&0 \\ 0&t_3&s_3
 \end{pmatrix}$, with $s_1s_2+s_1s_3+s_2s_3=t_1t_2+t_1t_3+t_2t_3=-1$.
 \item $s(X_i+tX_jR^2) = \begin{pmatrix}
 s_1&t_1&0 \\ 0&s_2&t_2 \\ t_3&0&s_3
 \end{pmatrix}$, with arbitrary signs $t_k$, and $s_1s_2+s_1s_3+s_2s_3=-1$.
 \item $s(X_i+tX_j)R = \begin{pmatrix}
 && 2s \\ \pm \xi && \\ &0&
 \end{pmatrix}$ and permutations of the written elements, with arbitrary sign $s$ and $\xi = 0,2$.
 \item $s(X_i+tX_jR)R = \begin{pmatrix}
 0&t_1&s_1 \\ s_2&0&t_2 \\ t_3&s_3&0
 \end{pmatrix}$, with arbitrary signs $t_k$, and $s_1s_2+s_1s_3+s_2s_3=-1$.
 \item $s(X_i+tX_j)R^2 = \begin{pmatrix}
 & 2s& \\ &&\pm \xi \\ 0&&
 \end{pmatrix}$ and permutations of the written elements, with arbitrary sign $s$ and $\xi = 0,2$.
\end{enumerate}

All the above mentioned permutations of elements and those between the $s_i$ and $t_i$ are performed by conjugation with $R^{\pm 1}$. Since
\[
(I-R-sR(A+tB)R^{-1}) = R(I-R-s(A+tB))R^{-1},
\]
by Remark \ref{rem1} we can just choose one permutation in each of the six cases to find the orbits of the solutions.

Accordingly, explicitly computing the expression
\begin{equation*}
I-R-s(A+tB) = \begin{pmatrix}
1&0&-1 \\ -1&1&0 \\ 0&-1&1
\end{pmatrix} - s(A+tB),
\end{equation*}
we end up with the following cases:
\begin{enumerate}
	\item $\begin{pmatrix}
	1+2s&0&-1 \\ -1&1\pm \xi&0 \\ 0&-1&1
	\end{pmatrix}$, for $s$ arbitrary sign.
	\item $\begin{pmatrix}
	2&&-2 \\ -\xi '&\xi& \\ &0&0
	\end{pmatrix}, \begin{pmatrix}
	2&&0 \\ -2&\xi& \\ &-\xi '&0
	\end{pmatrix}, \begin{pmatrix}
	0&&-2 \\ -\xi '&2& \\ &0&\xi
	\end{pmatrix}$, \\ with $\xi ,\xi ' =0,2$.
	\item $\begin{pmatrix}
	2&s_1&-1 \\ -1&\xi&s_2 \\ s_3&-1&0
	\end{pmatrix}$, with $s_i$ arbitrary.
	\item $\begin{pmatrix}
	1&0&2s-1 \\ -1&1&0 \\ 0&\pm \xi -1&1
	\end{pmatrix}$, with $s$ arbitrary.
	\item $\begin{pmatrix}
	1&s_1&0 \\ -2 &1&s_2 \\ s_3&-\xi &1
	\end{pmatrix}$, with $s$ arbitrary.
	\item $\begin{pmatrix}
	1&2s&-1 \\ -1&1&0 \\ \pm \xi&-1&1
	\end{pmatrix}$, with $s$ arbitrary.
\end{enumerate}
The only solution to cases 1 and 4 is $\mathcal O_{A_4}(\v v_b)$. Cases 3, 5 and 6 can be treated together since they exhibit a common structure: their solutions are $\mathcal O_{A_4}(\v v_b)$ (which has been already excluded by Prop. \ref{P3}) and $\mathcal O_{A_4}(\v v_{tt})$ (which is excluded in Sec.~\ref{app:ExclusionA4}). The only relevant case is 2, since all the other cases have been already excluded.

In case 2, the most general orbits of solutions are $\mathcal{O}_{A_4}(\v{v}_i)$ for $i=1,2,3$, where
\begin{equation}\label{solutions}
\begin{split}
\v{v}_1 = \begin{pmatrix} \alpha \\ \beta \\ \alpha \end{pmatrix},\ \v{v}_2 = \begin{pmatrix} 0 \\ \beta \\ \gamma \end{pmatrix},\ \v{v}_3 = R^2\v{v}_2.
\end{split}
\end{equation}
Nevertheless, for $\v{v}\in\{\v{v}_2,\v{v}_3\}$ the condition (\ref{difference}) is not violated. Indeed,  $\v{v}_2$ was found as a solution of 
\begin{equation}
(I-R+X_1-RX_1)\v{v}_2=0,
\end{equation}
however $X_1\v{v}_2=-\v{v}_2$, and thus Eq. (\ref{difference}) is satisfied. A similar argument holds for $\v{v}_3$. By virtue of  Prop. \ref{P3} the corresponding orbits $\mathcal{O}_{A_4}(\v{v}_2)$ and $\mathcal{O}_{A_4}(\v{v}_3)$ are excluded. 

From the above analysis we already know that the only relevant solution is $\v{v}_1$ for case 2, modulo cyclic permutations. We now impose that $X_1\v{v}_1$, which is in $\mathcal{O}_{A_4}(\v{v}_1)$, is itself a solution of Eq.~\eqref{system}. Thus we impose
\[
X_1\v{v}_1=\v{w}\in\left\{ 
\begin{pmatrix}
\alpha'  \\ \beta' \\ \alpha'
\end{pmatrix}, \begin{pmatrix}
\alpha' \\ \alpha' \\ \beta' 
\end{pmatrix} ,\begin{pmatrix}
\beta' \\ \alpha' \\ \alpha'
\end{pmatrix}\right\}.
\]
The solutions are $\mathcal{O}_{A_4}(\v{v}_s)$ and $\mathcal{O}_{A_4}(\v{v}_b)$: we can exclude also this last case. 

\subsubsection{Excluding $S_4$-symmetric Cayley graphs}\label{app:S_4}
The group $S_4$ contains $A_4$ as a subgroup of index 2. The element connecting the two cosets is an involution, which we will denote with $C$. $S_4$ has two three-dimensional irreducible representatons: their elements are signed permutations matrices of three elements and the two representations coincide up to a minus sign on the elements in the coset $CA_4$. Nevertheless, in our case the sign is irrelevant, since we are considering combinations $s(A+tB)$ of $A,B\in S_4$ with $s,t$ arbitrary signs. Accordingly, we consider the representation resulting from orbiting the elements generated by \eqref{A4rep} under the left action of $\{ I, C \}$ with
\begin{equation*}
C = \begin{pmatrix}
0&0&1 \\ 0&1&0 \\ 1&0&0
\end{pmatrix},
\end{equation*}
whose effect is just an exchange of the first and third row.

Let us now define $X_i' \coloneqq CX_i$. In order to perform the computation of $s(A+tB)$, we proceed as follows. We have to compute
\begin{equation}\label{combinations}
\begin{split}
s(X_i+tX_j), \ s(X_i'+tX_j'),\\
s(X_i+tX_jR),\ s(X_i'+tX_j'R), \\
s(X_i'+tX_j),\ s(X_i'+tX_jR),\ s(X_i'+tX_jR^2) \\
\end{split}
\end{equation}
and then recover all the remaining combinations by right multiplication of these by $R^{\pm 1}$. For \eqref{combinations}, we obtain the following cases:
\begin{enumerate}
 \item $\begin{pmatrix}
 \pm 2&& \\ & \xi & \\ &&0
 \end{pmatrix}, \begin{pmatrix}
 && \pm 2 \\ & \xi& \\ 0&&
 \end{pmatrix},$ considering all the permutations of elements and $\xi = 0,\pm 2$.
 \item $\begin{pmatrix}
 s_1&&t_1 \\ t_2&s_2& \\ &t_3&s_3
 \end{pmatrix}, \begin{pmatrix}
 &t_1&s_1 \\ t_2&s_2& \\ s_3&&t_3
 \end{pmatrix},\begin{pmatrix}
 s_1&&s_2 \\ & \xi& \\ s_3&&s_4
 \end{pmatrix}, $ \\
 $\begin{pmatrix}
 && \xi \\ t_2&s_2& \\ s_3&t_3&
 \end{pmatrix},\begin{pmatrix}
&t_1&  s_1\\ &s_2&t_2 \\ \xi && 
 \end{pmatrix}, $ for $\xi =0, \pm 2$.
\end{enumerate}
As mentioned above, one has to add to these cases the matrices resulting from a right multiplication of the previous ones by $R^{\pm 1}$, whose action is a cyclic permutation of the columns. Let us now consider
\[
I-R = \begin{pmatrix}
 1&0&-1 \\ -1&1&0 \\ 0&-1&1
\end{pmatrix},
\]
and derive the following matrices
\begin{equation}\label{matrices}
I-R+s(A+tB)R^i,\quad i=0,\pm 1
\end{equation}
for all the mentioned cases.
\begin{enumerate}
	\item It's easy to verify that, in this case, either the matrices in Eq. \eqref{matrices} have trivial solution or their solutions are $\mathcal{O}_{S_4}(\v{v}_b)$ (already excluded) and $\mathcal{O}_{S_4}(\v{v}_{tt})$ (which will be treated in Sec.~\ref{app:ExclusionA4}).
	\item $\begin{pmatrix}
	1+s_1&&t_1-1 \\ t_2-1&1+s_2& \\ &t_3-1&1+s_3
	\end{pmatrix}, \begin{pmatrix}
	1&& \xi-1 \\ t_2-1&1+s_2& \\ s_3&t_3-1&1
	\end{pmatrix},$ \\ $\begin{pmatrix}
	1&t_1&s_1-1 \\ s_2-1&1&t_2 \\ t_3&s_3-1&1
	\end{pmatrix},\begin{pmatrix}
	1& \xi&-1 \\ s_2-1&1&t_2 \\ t_3&-1&1+s_3
	\end{pmatrix},$ \\
	$\begin{pmatrix}
	1+t_1&s_1&-1 \\ -1&1+t_2&s_2 \\ s_3&-1&1+t_3
	\end{pmatrix}, \begin{pmatrix}
	1+\xi&& -1 \\ -1&1+t_2&s_2 \\ &s_3-1&1+t_3
	\end{pmatrix}$, \\
	$\begin{pmatrix}
	1&t_1&s_1-1 \\ t_2-1&1+s_2& \\ s_3&-1&1+t_3
	\end{pmatrix}, \begin{pmatrix}
	1+t_1&s_1&-1 \\ s_2-1&1+t_2& \\ &-1& \xi +1
	\end{pmatrix}$ \\
	$\begin{pmatrix}
	1+t_1&s_1&-1 \\ s_2-1&1&t_2 \\ &t_3-1&1+s_3
	\end{pmatrix}, \begin{pmatrix}
	1+s_1&&t_1-1 \\ t_2-1&1&s_2 \\ & \xi -1&1
	\end{pmatrix}$ \\
	$\begin{pmatrix}
	1+s_1&&t_1-1 \\ -1&1+t_2&s_2 \\ t_3&s_3-1&1
	\end{pmatrix}, \begin{pmatrix}
	1&t_1&s_1-1 \\ -1&1+s_2&t_2 \\  \xi&-1&1
	\end{pmatrix}$, \\
	$\begin{pmatrix}
	1+s_1&&s_2-1 \\ -1&1+ \xi& \\ s_3&-1&1+s_4
	\end{pmatrix}, \begin{pmatrix}
	1&s_2&s_1-1 \\  \xi-1&1& \\ &s_4-1&1+s_3
	\end{pmatrix},$ \\ $ \begin{pmatrix}
	1+s_2&s_1&-1 \\ -1&1& \xi \\ s_4&s_3-1&1
	\end{pmatrix}$.
\end{enumerate}

The above set can be partitioned into equivalence classes according to the relation:
\begin{equation}
N\sim M \Leftrightarrow \exists F\in S_4, F'\in\mathbb{GL}(3,\mathbb{R})\ :\ N=F'MF.
\end{equation}
By Remark \ref{rem1} the above equivalence relation preserves the orbits of solutions of the linear systems.
It is easy to check that there are five equivalence classes represented by the following matrices:\\
$M_1 = \begin{pmatrix}
1+s_1&&s_2-1 \\ -1&1+ \xi& \\ s_3&-1&1+s_4
\end{pmatrix},$ \\ $M_2 = \begin{pmatrix}
1+s_1&&t_1-1 \\ t_2-1&1+s_2& \\ &t_3-1&1+s_3
\end{pmatrix}$, \\
$M_3 = \begin{pmatrix}
1&t_1&s_1-1 \\ t_2-1&1+s_2& \\ s_3&-1&1+t_3
\end{pmatrix},$ \\ $M_4 = \begin{pmatrix}
1&t_1&s_1-1 \\ s_2-1&1&t_2 \\ t_3&s_3-1&1
\end{pmatrix}$, \\
$M_5 = \begin{pmatrix}
1& \xi&-1 \\ s_2-1&1&t_2 \\ t_3&-1&1+s_3
\end{pmatrix}$.\\
The solutions for $M_4,M_5$ are $\mathcal{O}_{S_4}(\v{v}_s)$, $\mathcal{O}_{S_4}(\v{v}_b)$ (that have been already excluded) and $\mathcal{O}_{S_4}(\v{v}_{tt})$, that will be treated in Sec.~\ref{app:ExclusionA4}.

The three remaining cases are given in the following:
\begin{itemize}
 \item For $M_1$ one has $\mathcal{O}_{S_4}(\v{v}_s)$, $\mathcal{O}_{S_4}(\v{v}_b)$, $\mathcal{O}_{S_4}(\v{v}_c)$, $\mathcal{O}_{S_4}(\v{v}_{tt})$, and $\mathcal{O}_{S_4}(\v{v}_1)$, $\mathcal{O}_{S_4}(\v{v}_2)$, with
 \[
 \v{v}_1= \begin{pmatrix} \alpha\\ \alpha \\ \beta \end{pmatrix},\ \v{v}_2 = \alpha \begin{pmatrix}  3 \\ 1 \\ 2 \end{pmatrix};
 \]
 The systems in the same equivalence class are connected by the permutations $F \in \{ R^{\pm 1}, C,CR^{\pm 1} \}$.
 \item For $M_2$ one has $\mathcal{O}_{S_4}(\v{v}_1)$ and $\mathcal{O}_{S_4}(\v{v}_3)$, with
 \[
 \v{v}_3 = \begin{pmatrix} 0 \\ \alpha \\ \beta \end{pmatrix};
 \]
 \item For $M_3$ one has $\mathcal{O}_{S_4}(\v{v}_s)$, $\mathcal{O}_{S_4}(\v{v}_b)$, $\mathcal{O}_{S_4}(\v{v}_c)$, $\mathcal{O}_{S_4}(\v{v}_to)$, $\mathcal{O}_{S_4}(\v{v}_1)$ and $\mathcal{O}_{S_4}(\v{v}_4)$, with
 \[
\v{v}_4 = \begin{pmatrix}  \alpha \\ \beta \\ \frac{\alpha + \beta}{2} \end{pmatrix} .
 \]
 The systems in the same equivalence class are connected by the permutations $F	\in \{R^2, C\}$.
\end{itemize}
We notice that $\v{v}_2$ is a particular case of $\v{v}_4$, then we can just treat the latter. On the other hand, the vectors in $\mathcal{O}_{S_4}(\v{v}_3)$ cannot be solutions for $M_i$ with $i\neq 2$, otherwise the orbit is reduced to 
$\mathcal{O}_{S_4}(\v{v}_s)$, or $\mathcal{O}_{S_4}(\v{v}_c)$, or $\mathcal{O}_{S_4}(\v{v}_{to})$, which are ruled out. The remaining case of $\mathcal{O}_{S_4}(\v{v}_3)$ can be then excluded via the same analysis of case 2 in the previous section.

We end up with $\mathcal{O}_{S_4}(\v{v}_1),\mathcal{O}_{S_4}(\v{v}_4)$. We observe that imposing that $X_2\v{v}_1$ is a solutions for $M_1,M_2, M_3$ gives rise to $\mathcal{O}_{S_4}(\v{v}_s)$, $\mathcal{O}_{S_4}(\v{v}_b)$, $\mathcal{O}_{S_4}(\v{v}_c)$. As for $\mathcal{O}_{S_4}(\v{v}_4)$, imposing that $X_2\v{v}_4$ is a solution leads to $\mathcal{O}_{S_4}(\v{v}_{tt})$, $\mathcal{O}_{S_4}(\v{v}_{to})$, and $\mathcal{O}_{S_4}(\v{v}_5)$ with $\v{v}_5 = \alpha(5,3,1)^T$. However, it's easy to verify that $(I-X_2R)\v{v}_5$ is uniquely determined as sum of elements of $\{ \v{0}, \mathcal{O}_{S_4}(\pm \v{v}_5)\}$, leading us to exclude this last case by virtue of Prop.~\ref{P3}.

Finally, as anticipated at the beginning of Sec.~\ref{app:AS_4}, we can exclude $(I-R)\v{v} = \pm A\v{v}$ for $A\in L$ and $L\cong A_4,S_4$: by direct inspection of the representation matrices of $S_4$, it turns out that this condition leads to $\mathcal{O}_{S_4}(\v{v}_c)$.

\subsection{Exclusion of the truncated tetrahedron}\label{app:ExclusionA4}
In this section we make use of the three-dimensional irreducible representation of $A_4$ provided in Subsec. \ref{app:A_4} in order to exclude, by means of the unitarity conditions, the graph whose primitive cell is the set of vertices of the truncated tetrahedron. This also excludes the case where the inverses are contained in $S_+$. For notation convenience, we will use the Pauli matrices notation $X\coloneqq X_1$,  $Y\coloneqq X_2$, and  $Z\coloneqq X_3$, and use the vector $\v{w}_{tt}= \alpha(1,1,3)^T$ instead of $\v{v}_{tt}$ as a representative of the orbit  $\mathcal{O}_{A_4}(\v{v}_{tt})$. In the following we will also denote the elements $G\v{w}_{tt}$ (for $G\in A_4$) with the shorthand $G$.

Let $U$ be a faithful unitary and (generally projective) representation of $A_4$ in $\mathbb{SU}(2)$. We will denote the transition matrices as
\begin{equation}\label{ApmG}
A_{\pm G} := U_G A_{\pm I} U_G^\dagger,
\end{equation}
with $G\in A_4$. From the unitarity conditions \eqref{eq:condunit}, choosing $\v{h}''=2\v{w}_{tt}$, one derives the form
\begin{equation}\label{Apm}
A_{\pm I} \coloneqq \alpha_{\pm} V \ket{\pm}\bra{\pm},
\end{equation}
with $\lbrace \ket{+}, \ket{-} \rbrace$ orthonormal basis, $\alpha_{\pm} > 0$ and $V$ unitary. Consider the following unitarity conditions:
\[
A_IA_{W}^\dagger + A_{-W}A_{-I}^\dagger = 0,\quad W=X,Y.
\]
By multiplication on the right by $A_{I}$ we obtain
\[
A_IA_W^\dagger A_I = 0,
\]
implying that $U_W$ must be antidiagonal in the $\lbrace \ket{+}, \ket{-} \rbrace$ basis or in $\lbrace V\ket{+}, V\ket{-} \rbrace$. On the other hand, from
\[
A_IA_{-R}^\dagger + A_{R}A_{-I}^\dagger = 0,
\]
one gets
\begin{equation}\label{AARA}
A_{-I}A_{R}^\dagger A_{-I} = 0,
\end{equation}
meaning that $U_R$ must be diagonal in $\lbrace \ket{+}, \ket{-} \rbrace$ or $\lbrace V\ket{+}, V\ket{-} \rbrace$.

Let us now suppose that $U_X$ is antidiagonal in $\lbrace \ket{+}, \ket{-} \rbrace$ and $U_Y$ antidiagonal in $\lbrace V\ket{+}, V\ket{-} \rbrace$ (or viceversa): then, since
\begin{equation}\label{eq:XYZR}
U_RU_XU_R^\dagger = s_1 U_Y,\ U_RU_YU_R^\dagger = s_2 U_Z,
\end{equation}
($s_1,s_2$ arbitrary signs) all of the $U_G$ for $G=X,Y,Z$ would be antidiagonal in one of the two bases, but this violates the algebra of $D_2 \equiv \lbrace I,X,Y,Z \rbrace$ in $A_4$. Accordingly, choosing the $\lbrace \ket{+}, \ket{-} \rbrace$ basis and imposing
\begin{equation}
U_XU_Y = t_1 U_YU_X = t_2 U_Z,\ U_{G}^2 = t_3I \label{D2alg}
\end{equation}
(for $G=X,Y,Z$ and $t_1,t_2,t_3$ arbitrary signs), it is easy to see that up to a change of basis we can always take:
\[
U_G = i\sigma_G,\ G=X,Y,Z
\]
with $\ket{+}, \ket{-}$ eigenvectors of $\sigma_Z$. This implies that, in order to satisfy \eqref{eq:XYZR}, $U_R$ cannot have vanishing elements in $\lbrace \ket{+}, \ket{-} \rbrace$ and then by Eq.\eqref{AARA} it must be diagonal in $\lbrace V\ket{+}, V\ket{-} \rbrace$. Consequently we must have:
\begin{equation}
U_R \coloneqq VDV^\dagger,\label{VDV}
\end{equation}
where $D = \operatorname{diag}(e^{i\epsilon},e^{-i\epsilon})$ in $\lbrace \ket{+}, \ket{-} \rbrace$ and $e^{3i\epsilon}$ is a sign. As a consequence, using conditions \eqref{D2alg} one sees that the $U_X,U_Y,U_Z$ cannot have vanishing elements in $\lbrace V\ket{+}, V\ket{-} \rbrace$. This in turn implies, by Eq. \eqref{eq:XYZR}, that $V$ cannot have vanishing elements in $\lbrace \ket{+}, \ket{-} \rbrace$.

Let us now pose
\[
V = \begin{pmatrix}
\rho e^{i\theta} & \tau e^{i\varphi} \\
-\tau e^{-i\varphi} & \rho e^{-i\theta}
\end{pmatrix}\ :\ \rho , \tau >0,\ \rho^2+\tau^2 =1 .
\]
Multiplying on the left by $A_{-I}^\dag$ the following unitarity condition
\[
A_IA_{RX}^\dagger + A_{-RX}A_{-I}^\dagger + A_R A_Y^\dagger + A_{-Y}A_{-R}^\dagger = 0,
\]
and reminding that  $A_{-I}^\dag A_R A_Y^\dagger=0$ by Eq. (\ref{VDV}), one has
\[
A_{-I}^\dagger A_{-RX}A_{-I}^\dagger + A_{-I}^\dagger A_{-Y}A_{-R}^\dagger = 0 .
\]
Now, substituting Eq. (\ref{Apm}), and using definition (\ref{ApmG}), the nonvanishing matrix element of the previous identity in the basis $\lbrace \ket{+}, \ket{-} \rbrace$ is
\begin{equation*}
\begin{aligned}
& \bra{-}V^\dagger U_{RX} V\ket{-} \bra{-}U_{RX}^\dagger \ket{-} = \\ 
&= -e^{i \epsilon} \bra{-}V^\dagger U_Y V\ket{-} \bra{-} U_Y^\dagger U_R \ket{-} .
\end{aligned}
\end{equation*}
Recalling the form of $V$ given above and using the fact that $U_{RX} = t'U_RU_X$ ($t'$ a sign) and that $U_R$ cannot have vanishing elements in the basis $\lbrace \ket{+}, \ket{-} \rbrace$, for the previous equation we finally obtain
\[
\cos (\theta_1+\varphi_1) = - i \sin (\theta_1+\varphi_1) e^{2i\epsilon},
\]
which has no solution.

\subsection{Exclusion of $\mathbb{Z}_n$, $D_{n}$, $\mathbb{Z}_n\times \mathbb{Z}_2$ and $D_{n}\times \mathbb{Z}_2$, with $n=3,4,6$}\label{app:Z_n}
The aim of the present section is: 1) to construct the real, orthogonal and three-dimensional faithful representations of the groups $L \in \lbrace \mathbb{Z}_n, D_{n}, \mathbb{Z}_n\times \mathbb{Z}_2, D_{n}\times \mathbb{Z}_2 \left. | \right. n=3,4,6  \rbrace$, and 2) to exclude all the graphs arising from $L$ by means of the unitarity conditions.

By the classification theorem for real matrices of finite order given in Ref. \cite{koo2003classification}, any matrix in $\mathbb{O}(3)$ of order $n$ is similar to one of the form

\begin{equation*}
R_{\theta, s} \coloneqq \begin{pmatrix}
\cos \theta & -\sin \theta & 0 \\
\sin \theta & \cos \theta & 0 \\
0 & 0 & s
\end{pmatrix},
\end{equation*}
with $\theta = \frac{2z\pi}{n}$, $z$ integer and $s$ a sign. The matrices $R_{\theta, s}$ represent the generators for the subgroups of order $n=3,4,6$ in $L$. We can generate the orbits of $L$ starting from the generic vector (up to a rotation around the $z$-axis) given by $\v{v}_1= (1,0,h)^T$. It is easy to show that the only matrices in $\mathbb{O}(3)$ of order 2 commuting with $R_{\theta ,s }$ for all $\theta$ and $s$ are $R_{0,t}$ and $R_{\pi , t}$: they represent the generators of $L/\mathbb{Z}_n$ for $L\cong\mathbb{Z}_n\times\mathbb{Z}_2$ or  $L/D_n$ for $L\cong D_n\times\mathbb{Z}_2$ . On the other hand, the involutions
\begin{equation*}
S_{\varphi, r} \coloneqq \begin{pmatrix}
\cos \varphi & \sin \varphi & 0 \\
\sin \varphi & -\cos \varphi & 0 \\
0 & 0 & r
\end{pmatrix}
\end{equation*}
are the only ones such that $S_{\varphi, r} R_{\theta, s} S_{\varphi, r}^{-1} = S_{\varphi, r} R_{\theta, s} S_{\varphi, r} = R_{\theta, s}^{-1}$. This implies that the $S_{\varphi, r}$ represent the generators for the subgroups of reflections when $L$ is a dihedral group. Therefore, in general, the elements of $\mathcal{O}_L(\v{v}_1)$ lie on the two circumferences which are parallel to the $xy$-plane at heights $z=\pm h$.

In order to solve the unitarity conditions, it is necessary to determine the paths with length 2 constructed by elements in $\{\v{0}\}\cup \mathcal{O}_L(\v{v}_1)$: by the above analysis, the problem is reduced to a two-dimensional problem, since the form of the vectors in $\mathcal{O}_L(\v{v}_1)$ is $\v{v}_i = (x_i,y_i,\pm h)^T  \coloneqq (\cos \chi_i ,\sin \chi_i,\pm h)^T $. Accordingly, it is easy to see that
\begin{equation*}
\v{v}_i\pm \v{v}_j = s\v{v}_l + t\v{v}_p,\quad \v{v}_i,\v{v}_j,\v{v}_l,\v{v}_p \neq \v{0},\;\text{ ($s,t$ signs)}
\end{equation*}
implies $(x_i,y_i) = s(x_l,y_l)$ or $(x_i,y_i) = t(x_p,y_p)$.

\textbf{Case $\mathbf{n=4}$.} There are at least two inequivalent orthogonal representations of $L \in \lbrace \mathbb{Z}_4, D_{4}, \mathbb{Z}_4\times \mathbb{Z}_2, D_{4}\times \mathbb{Z}_2 \rbrace$, since the element of order 4 can be either represented by $R_{\frac{\pi}{2},-}$ or $R_{\frac{\pi}{2},+}$. We shall now analyse the two different cases.

$R_{\frac{\pi}{2},-}$ generates the four vectors
\begin{equation*}
\v{v}_1=\begin{pmatrix}
1 \\ 0 \\ h
\end{pmatrix}, \v{v}_2=\begin{pmatrix}
0 \\ 1 \\ -h
\end{pmatrix}, \v{v}_3=\begin{pmatrix}
-1 \\ 0 \\ h
\end{pmatrix}, \v{v}_4=\begin{pmatrix}
0 \\ -1 \\ -h
\end{pmatrix} .
\end{equation*}
The differences $\v{v}_i - \v{v}_j \neq 0$ $\forall i,j\in \{1,2,3,4 \}$ are uniquely determined as sums of elements of $\{ \v{0}, \mathcal{O}_L(\pm \v{v}_1)\}$. Accordingly, there is a cyclic subgroup of order 4 (i.e.~ $\mathbb{Z}_4$) whose orbit satisfies Eq. \eqref{e21} and thus, invoking Prop. \ref{prop1} (we remind that the representation $U$ must be faithful), we exclude the representation containing $R_{\frac{\pi}{2},-}$.

Taking now $R_{\frac{\pi}{2},+}$, the orbit is
\begin{equation*}
\v{v}_1=\begin{pmatrix}
1 \\ 0 \\ h
\end{pmatrix}, \v{v}_2=\begin{pmatrix}
0 \\ 1 \\ h
\end{pmatrix}, \v{v}_3=\begin{pmatrix}
-1 \\ 0 \\ h
\end{pmatrix}, \v{v}_4=\begin{pmatrix}
0 \\ -1 \\ h
\end{pmatrix}.
\end{equation*}
We have that the vectors
\begin{equation}\label{paths}
\v{v}_1 + \v{v}_2,\quad \v{v}_1 - \v{v}_3
\end{equation}
are uniquely determined as sum of elements of $\{ \v{0}, \mathcal{O}_L(\pm \v{v}_1)\}$. Let us denote with $R$ the matrix representing $R_{\frac{\pi}{2},+}$ in $\mathbb{SU}(2)$ and proceed as in Sec. \ref{app:ExclusionA4}. From now on in the present section we use the notation of Eq. \eqref{ApmG} and perform calculations in the $\left\lbrace \ket{+},\ket{-} \right\rbrace$ basis. Multiplying on the right by $A_{\v{v}_1}$ the unitarity conditions associated to the vectors in \eqref{paths}, we obtain
\begin{equation}\label{uu2}
A_{\v{v}_1}RA_{-\v{v}_1}^{\dagger}R^{\dagger}A_{\v{v}_1}=0,\ A_{\v{v}_1}R^2A_{\v{v}_1}^{\dagger}{R^2}^{\dagger}A_{\v{v}_1}=0.
\end{equation}
By the first of conditions \eqref{uu2}, up to a change of basis we can impose
\begin{equation*}
R = \begin{pmatrix}
\mu & 0 \\
0 & \mu^*
\end{pmatrix},\ R^4 = s I
\end{equation*}
($s$ arbitrary sign); using the second condition, it follows that
\begin{equation}\label{u2}
R^2 = \begin{pmatrix}
\mu^2 & 0 \\
0 & {\mu^*}^2
\end{pmatrix} = V \begin{pmatrix}
0 & \nu \\
-\nu^* & 0
\end{pmatrix}V^{\dagger},
\end{equation}
and thus necessarily $\mu^2 \neq {\mu^*}^2$. Consider now the unitarity condition
\[
A_{\v{v}_1}A_{\v{v}_2}^\dagger + A_{-\v{v}_2}A_{-\v{v}_1}^\dagger + A_{\v{v}_4} A_{\v{v}_3}^\dagger + A_{-\v{v}_3}A_{-\v{v}_4}^\dagger = 0.
\]
Multiplying the last equation by $A_{\v{v}_1}$ on the right and taking the adjoint we get~\footnote{One has $A_{-\v{v}_2}A_{-\v{v}_1}^\dag A_{\v{v}_1}=0$ since $A_{-\v{v}_1}^\dag A_{\v{v}_1}=0$, and $A_{\v{v}_4}A_{\v{v}_3}^\dag A_{\v{v}_1}=0$,  since $A_{\v{v}_3}^\dag A_{\v{v}_1}=R^2 A_{\v{v}_1}^\dag R^{2\dag}A_{\v{v}_1}$ and  $ A_{\v{v}_1}^\dag R^2 A_{\v{v}_1}=\alpha_+^2\ket{+}\bra{+}V^\dag R^2V\ket{+}\bra{+}$, and by Eq. (\ref{u2}) $\bra{+}V^\dag R^2V\ket{+}=0$.
}

\[
A_{\v{v}_{1}}^\dagger A_{\v{v}_2}A_{\v{v}_{1}}^\dagger + A_{\v{v}_{1}}^\dagger A_{-\v{v}_4}A_{-\v{v}_3}^\dagger = 0,
\]
which amounts to
\begin{equation}\label{eq:elcond}
\begin{aligned}&\frac{\alpha_+^2}{\alpha_-^2}\bra{+}R^\dag\ket{+}\bra{+}V^\dagger R V\ket{+}=\\
=& -\nu^*\bra{-}R^\dagger\ket{-} \bra{+}V^\dagger {R^3}V\ket{-}.
\end{aligned}
\end{equation}
Posing now
\begin{equation*}
V^\dagger R V = \begin{pmatrix}
a & b \\
-b^* & a^*
\end{pmatrix},
\end{equation*}
with $a,b \neq 0$ since otherwise $V^\dagger R^2 V$ cannot be anti-diagonal (see \eqref{u2}), we have that
\begin{equation*}
V^\dagger {R^3} V = (V^\dagger {R} V)(V^\dagger {R^2} V) =
\begin{pmatrix}
a & b \\
-b^* & a^*
\end{pmatrix}
\begin{pmatrix}
0 & \nu \\
-\nu^* & 0
\end{pmatrix}.
\end{equation*}
Accordingly, Eq. \eqref{eq:elcond} leads to
\begin{equation*}
\frac{\alpha_+^2}{\alpha_-^2} = -\mu^2,
\end{equation*}
which is impossible, since $\mu^2 \neq {\mu^*}^2$.

\textbf{Cases $\mathbf{n=3,6}$.} The representations of $L \in \lbrace \mathbb{Z}_n, D_{n}, \mathbb{Z}_n\times \mathbb{Z}_2, D_{n}\times \mathbb{Z}_2 \left. | \right. n=3,6  \rbrace$ must contain $R_{\frac{2\pi}{3}, +}$, which generates a subgroup $K$ isomorphic to $\mathbb{Z}_3$: $\mathcal{O}_K(\v{v}_1)$ is given by the following vectors:
\begin{equation*}
\v{v}_l = \begin{pmatrix}
\cos \frac{2\pi}{3}(l-1) \\ \sin \frac{2\pi}{3}(l-1)  \\ h
\end{pmatrix},\ l\in \{1,2,3\}.
\end{equation*}
We denote the representation matrix of $R_{\frac{2\pi}{3}, +}$ in $\mathbb{SU}(2)$ with $U_{\frac{2\pi}{3}}$.

If $\v{v}_1-\v{v}_2$ is uniquely determined as sum of elements of $\{ \v{0}, \mathcal{O}_L(\pm \v{v}_1)\}$ (a particular case is given by the condition $h=0$), we can exclude this case by Prop. \ref{P3}. Let us then suppose that $\v{v}_1-\v{v}_2$ is not uniquely determined as sum of elements of $\{ \v{0}, \mathcal{O}_L(\pm \v{v}_1)\}$ (in particular $h\neq 0$) . Then, by the above analysis, $\mathcal{O}_L(\v{v}_1)$ must contain
\begin{equation*}
\v{v}_{l} = \begin{pmatrix}
-\cos \frac{2\pi}{3}(l-1) \\ -\sin \frac{2\pi}{3}(l-1)  \\ h
\end{pmatrix},\ l\in \{4,5,6\}
\end{equation*}
(such that $\v{v}_1-\v{v}_2 = \v{v}_{5}-\v{v}_{4}$). Again, via the above arguments on the representations of $L$, is easy to see that $\v{v}_1 + \v{v}_2$ is uniquely determined as sum of elements of $\{ \v{0}, \mathcal{O}_L(\pm \v{v}_1)\}$. Then, from condition
\begin{equation*}
A_{\v{v}_1}A_{-\v{v}_2}^\dag+A_{\v{v}_2}A_{-\v{v}_1}^\dag=0,
\end{equation*}
by multiplying on the right by $A_{\v{v}_1}$, we get
\begin{equation*}
A_{\v{v}_1}A_{-\v{v}_2}^\dagger A_{\v{v}_1} = 0.
\end{equation*}
Up to a change of basis $U_{\frac{2\pi}{3}} = \operatorname{diag} (e^{i\epsilon}, e^{-i\epsilon})$ holds with $e^{3i\epsilon}=\pm1$, and $\epsilon\not\in\{0,\pi\}.$
Let $U_\pi$ represent the element of $L$ mapping $\v{v}_1$ to $\v{v}_{4}$. This element is an involution and there are only two cases (by inspection of the groups $L$ here considered)
\begin{equation}
U_\pi U_{\frac{2\pi}{3}} U_\pi^\dagger=
\begin{cases}
&s U_{\frac{2\pi}{3}} \\
&\\
&s' U_{\frac{2\pi}{3}}^\dagger ,
\end{cases}
\end{equation}
($s,s'$ signs). Recalling that the representation $U \subset \mathbb{SU}(2)$ is faithful and $U_{\frac{2\pi}{3}}^3 = tI$ ($t$ a sign), it is easy to verify that the previous two conditions on $U_{\frac{2\pi}{3}},U_\pi$ are satisfied respectively only if
\begin{enumerate}
\item $U_\pi$ is diagonal; \label{udiag}
\item $U_\pi$ is anti-diagonal. \label{uanti}
\end{enumerate}
Multiplying by $A_{\v{v}_1}$ on the right the unitarity condition associated to the difference $\v{v}_1-\v{v}_{4}$
\begin{equation*}
A_{\v{v}_1}A_{\v{v}_4}^\dag+A_{-\v{v}_4}A_{-\v{v}_1}^\dag=0,
\end{equation*}
one also gets
\begin{equation*}
A_{\v{v}_1}A_{\v{v}_4}^\dagger A_{\v{v}_1} = 0,
\end{equation*}
namely either $A_{\v{v}_4}^\dagger A_{\v{v}_1}=0$ or $A_{\v{v}_1}A_{\v{v}_4}^\dagger=0$. This implies that a) $V^\dagger U_{\pi} V$ is anti-diagonal or b) $U_{\pi}$ is anti-diagonal. In case a), multiplying by $A_{\v{v}_1}$ on the right the unitarity condition
\begin{equation*}
A_{\v{v}_1}A_{\v{v}_2}^\dagger + A_{-\v{v}_2}A_{-\v{v}_1}^\dagger + A_{\v{v}_5} A_{\v{v}_4}^\dagger + A_{-\v{v}_4}A_{-\v{v}_5}^\dagger = 0,
\end{equation*}
it follows that 
\begin{equation}\label{eq:unitZ6first}
A_{\v{v}_1}A_{\v{v}_2}^\dagger A_{\v{v}_1} + A_{-\v{v}_4}A_{-\v{v}_5}^\dagger A_{\v{v}_1} = 0 ;
\end{equation}
in case b) multiplying by $A_{\v{v}_1}^\dagger$ on the right the unitarity condition
\begin{equation}
A_{\v{v}_1}^\dagger A_{\v{v}_2} + A_{-\v{v}_2}^\dagger A_{-\v{v}_1} + A_{\v{v}_5}^\dagger A_{\v{v}_4} + A_{-\v{v}_4}^\dagger A_{-\v{v}_5} = 0,
\end{equation}
and taking the adjoint, it follows that 
\begin{equation}\label{eq:unitZ6second}
A_{\v{v}_1} A_{\v{v}_2}^\dagger A_{\v{v}_1} + A_{\v{1}} A_{-\v{v}_5}^\dagger A_{-\v{v}_4} = 0.
\end{equation}
Let us now pose
\begin{equation*}
V^\dagger U_{\frac{2\pi}{3}} V = \begin{pmatrix}
 a  &  b \\
 -b^* & a^*
\end{pmatrix},
\end{equation*}
where $a \neq 0$ since $U_{\frac{2\pi}{3}}^3 = tI$. In case a), from \eqref{eq:unitZ6first} one then has
\begin{equation*}
\frac{\alpha_+^2}{\alpha_-^2} e^{i\epsilon} = -\bra{-}U_\pi^\dagger U_{\frac{2\pi}{3}}U_\pi\ket{-},
\end{equation*}
which cannot be satisfied neither in case \ref{udiag} nor in case \ref{uanti}. On the other hand in case b), from \eqref{eq:unitZ6second} one has
\begin{equation*}
\frac{\alpha_+^2}{\alpha_-^2} a^*e^{i\epsilon} = -e^{-i\epsilon}\bra{-}V^\dagger U_\pi^\dagger U_{\frac{2\pi}{3}}^\dagger U_\pi V\ket{-},
\end{equation*}
and being $U_\pi$ anti-diagonal, one has
\begin{equation*}
\frac{\alpha_+^2}{\alpha_-^2} a^*e^{2i\epsilon} = - \bra{-}V^\dagger U_{\frac{2\pi}{3}} V\ket{-} = -a^*,
\end{equation*}
which is impossible, since $e^{3i\epsilon}=\pm1$, for $\epsilon\not\in\{0,\pi\}$.

\subsection{Remaining presentations arising from $\mathbb{Z}_2$, $D_2$ and $D_2 \times \mathbb{Z}_2$}\label{app:Z_2}
By the argument of Sec. \ref{app:Z_n}, any matrix of order 2 in $\mathbb{O}(3)$ is similar to
\begin{equation*}
M_{s,t} \coloneqq \begin{pmatrix}s & 0 & 0 \\
0 & s & 0 \\
0 & 0 & t
\end{pmatrix},
\end{equation*}
with $s,t$ signs. Accordingly, up to conjugation, any three-dimensional orthogonal representation of a group $L\in \left\lbrace\mathbb{Z}_2,D_2,D_2 \times \mathbb{Z}_2\right\rbrace$ contains $M_{s,t}$. If $s \neq t$, any matrix $N$ of order 2 in $\mathbb{O}(3)$ commuting with $M_{s,t}$ is either $M_{s',t'}$, or of the form:
\begin{equation*}
N = \begin{pmatrix}
\cos \varphi & \sin \varphi & 0 \\
\sin \varphi & -\cos \varphi & 0 \\
0 & 0 & r
\end{pmatrix},
\end{equation*}
with $r$ a sign. Being the two dimensional block a reflection matrix, there exists a similarity transformation which maps it to $\pm \sigma_{z}$ (and leaving $M_{s,t}$ invariant). Thus the real, orthogonal and three-dimensional faithful representations of the groups here considered contain just $M_{s,t}$ and
\begin{equation*}
N_{r_1, r_2} \coloneqq \begin{pmatrix}
r_1 & 0 & 0 \\
0 & -r_1 & 0 \\
0 & 0 & r_2
\end{pmatrix}.
\end{equation*}
The problem reduces to combine signs in $M_{s,t},N_{r_1,r_2}$ to give rise to faithful representations of $L$. It is easy to check that they give rise to the integer lattice, the square lattice or the BCC lattice (one can include the inverses or not). Nevertheless, there are two ways of providing a minimal generating set (namely such that $S_+\neq S_-$) for $\mathbb{Z}^3$ and whose Cayley graph is associated with the BCC lattice. Such presentations are both generated by $D_2$: one is made with the vertices of a tetrahedron; the second one corresponds to the vertices given by the following vectors
\begin{equation*}
\v{v}_0 = \begin{pmatrix} 1 \\ 1 \\ h \end{pmatrix}, \v{v}_1 = \begin{pmatrix} -1 \\ -1 \\ h \end{pmatrix},
\v{v}_2 = \begin{pmatrix} -1 \\ 1 \\ h \end{pmatrix},
\v{v}_3 = \begin{pmatrix} 1 \\ -1 \\ h \end{pmatrix}.
\end{equation*}
We notice that excluding this solution allows us to exclude the case including the inverses, namely $S_+=S_-$.

From the unitarity conditions one has:
\begin{align}
&A_{\v{v}_0}A_{\v{v}_1}^\dagger A_{\v{v}_0} = 0, \label{eq:unit1}  \\ 
&A_{\v{v}_0}A_{-\v{v}_i}^\dag A_{\v{v}_0}=0,\quad i=2,3, \label{eq:unit2} \\
&A_{\v{v}_0}A_{-\v{v}_1}^\dagger + A_{\v{v}_1}A_{-\v{v}_0}^\dagger + A_{\v{v}_2}A_{-\v{v}_3}^\dagger + A_{\v{v}_3}A_{-\v{v}_2}^\dagger = 0. \label{eq:unit3} 
\end{align}
From \eqref{eq:unit1} and the form of Eqs. \eqref{ApmG},\eqref{Apm} for the transition matrices, we get $U_1 = i\sigma_1$ (we use the equivalent notation for Pauli matrices: $\sigma_0:=I$,
 $\sigma_1:=\sigma_X$, $\sigma_2:=\sigma_Y$, $\sigma_3:=\sigma_Z$),   
up to a change of basis; from \eqref{eq:unit2} we end up with the two cases:
\begin{enumerate}
	\item $A_{-\v{v}_2}^\dagger A_{\v{v}_0} = A_{-\v{v}_3}^\dagger A_{\v{v}_0} = 0$,
	\item $A_{-\v{v}_2}^\dagger A_{\v{v}_0} = A_{\v{v}_0} A_{-\v{v}_3}^\dagger = 0$,
\end{enumerate}
since $ A_{\v{v}_0}A_{-\v{v}_2}^\dagger =  A_{\v{v}_0}A_{-\v{v}_3}^\dagger = 0$ is forbidden in order to respect the $D_2$ algebra, while the case $A_{-\v{v}_3}^\dagger A_{\v{v}_0} = A_{\v{v}_0} A_{-\v{v}_2}^\dagger = 0$ is accounted by the symmetry of the unitarity conditions under the exchange $2 \leftrightarrow 3$. In case 1, the condition is incompatible with a faithful representation of $D_2$ in $\mathbb{SU}(2)$. In case 2, we have $U_G = i\sigma_G$ and $U_{2} = VDV^\dagger$ with $D$ diagonal, implying $U_{2} = s V (i\sigma_3) V^\dagger$ ($s$ a sign). Then, up to a global sign, one has
\begin{equation*}
V = \frac{1}{\sqrt{2}}
\begin{pmatrix}
 i  &  s\\
-s & - i 
\end{pmatrix},
\end{equation*}
and from \eqref{eq:unit3} we obtain
\begin{equation*}
A_{\v{v}_0}A_{-\v{v}_1}^\dagger A_{\v{v}_0} + A_{\v{v}_2}A_{-\v{v}_3}^\dagger A_{\v{v}_0} = 0,
\end{equation*}
which, using the form of Eqs. \eqref{ApmG},\eqref{Apm} for the transition matrices along with the previous results, leads to $-2\alpha_+^2\alpha_-=0$, contradicting the assumption $\alpha_\pm\neq 0$.

\bibliography{Invitation}

\end{document}